\newtheorem{thm}{Theorem}[section]
\newtheorem{cor}[thm]{Corollary}
\newtheorem{lem}[thm]{Lemma}
\newtheorem{prop}[thm]{Proposition}
\theoremstyle{definition}
\newtheorem{defn}[thm]{Definition}
\newtheorem{ass}[thm]{Assumption}
\theoremstyle{remark}
\newtheorem{rem}[thm]{Remark}
\newtheorem{exa}[thm]{Example}
\numberwithin{equation}{section}
\newcommand{\eps}{\varepsilon}
\newcommand{\A}{\mathcal{A}}
\newcommand{\K}{\mathcal{K}}
\newcommand{\such}{\ | \ }
\newcommand{\prob}{\mathbb{P}}
\newcommand{\qprob}{\mathbb{Q}}
\newcommand{\argmax}{\operatorname{argmax}}
\newcommand{\espalt}[3]{\mathbb{E}^{#1}_{#2}\left[#3\right]}
\newcommand{\tespalt}[3]{\wt{\mathbb{E}}^{#1}_{#2}\left[#3\right]}
\newcommand{\var}[1]{\mathrm{Var}\left[#1\right]}
\newcommand{\covar}[2]{\mathrm{Cov}\left[#1,#2\right]}
\newcommand{\probtriple}{(\Omega, \mathcal{F}, \prob)}
\newcommand{\filt}{\mathbb{F}}
\newcommand{\F}{\mathcal{F}}
\newcommand{\E}{\mathcal{E}}
\newcommand{\Leb}{\mathsf{Leb}}
\newcommand{\qtau}{\mathfrak{u}}
\newcommand{\nada}[1]{}
\newcommand{\dfn}{\, := \,}
\newcommand{\bra}[1]{\left[#1\right]}
\newcommand{\cbra}[1]{\left\{#1\right\}}
\newcommand{\dbra}[1]{[\kern-0.15em[ #1 ]\kern-0.15em]}
\newcommand{\dbraco}[1]{[\kern-0.15em[ #1 [\kern-0.15em[}
\newcommand{\tprob}{\widetilde{\mathbb{P}}}
\newcommand{\wt}[1]{\widetilde{#1}}
\newcommand{\reals}{\ensuremath{\mathbb R}}
\newcommand{\mcp}{\mathcal{P}}
\newcommand{\esp}{\mathbb{E}}
\newcommand{\condespalt}[3]{\esp^{#1}\bra{#2\big| #3}}
\newcommand{\ol}[1]{\overline{#1}}
\newcommand{\ul}[1]{\underline{#1}}
\newcommand{\essinf}[2]{\textrm{essinf}_{#1}\left(#2\right)}
\newcommand{\esssup}[2]{\textrm{esssup}_{#1}\left(#2\right)}
\newcommand{\leb}{\textrm{Leb}}
\newcommand{\hphi}{\hat{\varphi}}
\title[Optimal Investment, Demand and Arbitrage under Price Impact ]{Optimal Investment, Derivative Demand and Arbitrage under Price Impact}
\author{Michail Anthropelos}
\address{Department of Banking and Financial Management\\
University of Piraeus\\
Piraeus, Greece}
\email{anthropel@unipi.gr}
\thanks{M. Anthropelos is supported in part by the Research Center of the University of Piraeus}
\author{Scott Robertson}
\address{Questrom School of Business\\
Boston University\\
Boston, MA 02215}
\email{scottrob@bu.edu}
\thanks{S. Robertson is supported in part by the National Science Foundation
  under grant number DMS-1613159.}
\author{Konstantinos Spiliopoulos}
\address{Department of Mathematics \& Statistics\\
Boston University\\
Boston, MA 02215}
\email{kspiliop@math.bu.edu}
\thanks{K. Spiliopoulos is supported in part by the National Science Foundation under grant number DMS-1550918. }
\date{\today}
\begin{document}

\begin{abstract}
This paper studies the optimal investment problem with random endowment in an inventory-based price impact model with competitive market makers.  Our goal is to analyze how price impact affects optimal policies, as well as both pricing rules and demand schedules for contingent claims. For exponential market makers preferences, we establish two effects due to price impact: constrained trading, and non-linear hedging costs. To the former, wealth processes in the impact model are identified with those in a model without impact, but with constrained trading, where the (random) constraint set is generically neither closed nor convex. Regarding hedging, non-linear hedging costs motivate the study of arbitrage free prices for the claim. We provide three such notions, which coincide in the frictionless case, but which dramatically differ in the presence of price impact. Additionally, we show arbitrage opportunities, should they arise from claim prices, can be exploited only for limited position sizes, and may be ignored if outweighed by hedging considerations. We also show that arbitrage inducing prices may arise endogenously in equilibrium, and that equilibrium positions are inversely proportional to the market makers' representative risk aversion. Therefore, large positions endogenously arise in the limit of either market maker risk neutrality, or a large number of market makers.
\end{abstract}

\maketitle
\section{Introduction}\label{S:intro}

In this paper, we study the effects of inventory-based price impact on optimal investment, as well as the pricing of, and optimal demand for, contingent claims. In our model, price impact occurs endogenously when the investor submits an order to the market makers, as it changes their aggregate inventory and hence pricing rule. Our goal is to understand how price impact in the underlying market affects optimal investment in both the underlying assets and the claim. To this end, we establish two primary effects due to price impact: constrained trading and non-linear hedging costs.

Non-linear hedging costs mean that, even if the constraint is absent and the investor can fully hedge any position on the claim, the cost of the hedging strategy is not linear with respect to the position size, as it is in the frictionless case. This is due to price impact, and requires us to both revisit the notion of arbitrage-free prices and ask whether it is always optimal to exploit an arbitrage opportunity, since price impact in the underlying market may limit the arbitrage's gains. Furthermore, non-linear costs naturally lead us to consider the investor's demand schedule (i.e.~optimal demand for a given traded price) and ask whether it can ever be optimal to take a large position in the claim. Motivated by \cite{MR3678485}, where for models without price impact, large positions endogenously arise in conjunction with either vanishing hedging errors, risk aversion or transactions costs, we want to examine whether optimal demands become large as the price impact vanishes.

Rather than specifying the price impact of trades exogenously, as in \cite{MR2030833,MR1398050, MR1482708, jarrow1994derivative, sircar1998general, MR1613291, MR1776395,RocSon13} amongst others, we follow the financial economics paradigm, where the impact is determined endogenously. We use the impact models first developed in \cite{Ho198147,stoll1978}, and subsequently extended by \cite{MR3320327,MR3375887}. Here, market makers compete when quoting prices for a given demand process. The competition causes market makers to form Pareto optimal allocations, and cash balances to preserve their expected utilities. The cash balance process then determines a wealth process for the large investor. The problem of hedging claims in this setting was considered in the recent works \cite{MR3778361, sah2014hedging}.  Here, as a natural counterpart, we consider the dynamic optimal investment problem with random endowments (the corresponding static problem is studied in \cite{AnthBanGok18}, while \cite{fukasawa18} contains partial results in continuous time).

The first step is to examine the wealth process associated to a given investor trading strategy. We show (as was done in \cite{said2007} for a single asset) that under exponential preferences for the market makers, wealth processes in the price impact model can be identified with those in fictitious market without price impact, but with constrained trading. However, the constraint set varies with both time and scenario, and as highlighted by examples in Section \ref{SS:cons_set}, typically it is neither closed nor convex. As such, much of the standard theory for solving the investor's portfolio choice problem does not apply (c.f.~\cite{cvitanic1992cdc}). Adding to the difficulty is that the (if finite) boundary of the constraint set may correspond to an infinite investor demand in the price impact model, and at least formally, infinite demands may lead to well defined wealth processes (see Section \ref{S:cara_li} for a more detailed discussion). In order to rule out this situation, care must be taken when selecting the terminal payoff of the traded assets, and its relationship to aggregate market maker endowment.

While general conclusions regarding optimal investment are lacking due to the nature of the constraint set, it does turn out that in many cases of interest the constraint set is non-binding for the optimal investment problem.  For example, in the Bachelier model the constraint set is absent. Additionally, in the general case, if the market makers' and investor's endowments are  ``securitized'' (portfolios of the traded assets plus an independent component) optimal policies are static, and hence always admissible. Indeed, absent any exogenous order flow to the market makers (by other investors or ``noise traders''), the initial optimal order puts the large investor and the market makers to a Pareto-optimal situation, and no further trading is necessary. We establish these results by linking CARA investor preferences in the price impact model with CRRA preferences in the constrained trading model.

Allowing for the above conclusions, our main focus is dedicated to identifying pricing rules and demand schedules when price impact is present in the underlying market. We observe that hedging a position on a claim yields two separate but related questions: can we hedge? and how much does it cost to hedge?  To the second question we also ask if hedging costs are linear in the position size, and if not, what are the consequences?

To isolate the effects of hedging costs upon optimal demands, we remove the constrained trading effect, assuming for any position in the claim that a replicating strategy for that position exists.  Here, we show the hedging cost is not linear in the units of the claim.  As impact enlists doubt as to the scalability of arbitrage opportunities, non-linear hedging costs mandate we revisit the notion of arbitrage-free pricing. For this, we provide three definitions of an arbitrage-free price, which coincide in the frictionless case, but which in this price impact model, yield strikingly different arbitrage-free price ranges. Indeed, we obtain arbitrage-free price ranges which vary from a singleton all the way to the maximal interval, determined by claim's essential infimum and essential supremum.

While by definition a non-arbitrage free price gives rise to an arbitrage opportunity, due to price impact, the gain from the arbitrage is \textit{limited} in that it can be exploited only for small (enough) positions in the claim: for large positions the arbitrage vanishes. This follows because when the investor trades the arbitrage opportunity with the market makers, she changes their inventory and hence pricing rules. After a certain threshold in the claim's position size, hedging becomes  expensive and the arbitrage vanishes. Therefore, if the investor wants to buy/sell the derivative for hedging reasons, she may ignore an arbitrage opportunity when the benefits from hedging outweigh the limited gain from arbitrage. While this is in \textit{sharp contrast} with the notion of arbitrage in frictionless markets without price impact, it is consistent with the ``limits to arbitrage'' evidence in markets with frictions such as capital constraints or transaction costs (see \cite{AchLochRam13} for related empirical results). Similar non-scalable arbitrage opportunities also appear in \cite{GuaRas15} for continuous-time models with general market frictions. Lastly we comment that while pricing and demand research for contingent claims is well studied in frictionless markets, corresponding studies in the presence of price impact are scarce (for exceptions, see \cite{LiuYong05, MR3519167}).

The above implies optimal demand for a claim may be finite even if the traded price allows for arbitrage. In fact, optimal demands also depend on investor endowment and its relationship with both the claim and underlying assets. But, when is the optimal demand large? To answer this question we apply the results of \cite{MR3678485} and show optimal demands are inversely proportional to the (representative) market maker risk aversion. Thus, positions become large as either a \emph{single} market maker becomes risk neutral (a common assumption in microeconomics, c.f.~\cite{KylObiWa18}) or as the number of market makers increases (since the representative risk tolerance aggregates individual risk tolerances). As such, our results support the observed position sizes of over-the-counter (OTC) derivatives, accounting for price impact, provided either a large number of market makers, or market makers sufficiently close to risk-neutrality.

Lastly, we ask whether, and under which circumstances, the traded price for the claim (arbitrage-free or not) arises endogenously in equilibrium. For this, we use the notion of Partial Equilibrium Price Quantities (PEPQ) as introduced in \cite{MR2667897}, where two large investors engage in an OTC transaction, at a price and quantity which is utility-optimal for each. However, as we deal with price impact, care must be taken when describing exactly whom the  investors are trading with in order to hedge their risks. We take the perspective of a \textit{segmented market}, which has recently attracted significant attention (c.f.~\cite{RahZig09,RahZig14}). Here, price impact is ``local'', in that each investor trades in her local market,  with (local) market maker and traded assets. Then, the investors additionally trade with each other on a separate claim, according to the PEPQ formulation. Segmentation avoids the investors from hedging demands with the same market maker (and hence having to declare a cash-balance sharing rule), and enables private trading of the derivative rather then embedding the derivative into the previously existing traded assets. Assuming CARA preferences for both investors, we provide sufficient conditions for the existence and the uniqueness of the PEPQ. Remarkably, when the investors' endowments are sufficiently different, the endogenously arising equilibrium prices may not be arbitrage-free. Now, equilibrium prices that lie outside of the range of arbitrage-free prices have appeared in the financial economics literature. For example, \cite{KouPap2004} shows that arbitrage opportunities are compatible with equilibrium in a strategic security market model, while in \cite{CorGop10}, arbitrage equilibrium prices occur under restricted participation and exogenous portfolio constraints. However, to the best of our knowledge, this is the first time equilibrium arbitrage prices appear in the context of contingent claim pricing.

The rest of the paper is organized as follows. In Section \ref{S:Model}, we describe the price impact model, connect it with a fictitious market model with random constraints, and discuss the properties of the constraint set. In Section \ref{S:cara_li}, we specify to when the investor has exponential preferences and provide examples where the constraint is non-binding. Section \ref{S:Pricing} is dedicated to contingent claim demand, pricing and arbitrage. In Section \ref{S:PEPQ} we endogenize the traded derivative prices and positions using the PEPQ framework in segmented markets.  Lastly, Appendix \ref{AS:Proofs} contains proofs of the main results.

\section{The price impact model and fictitious market}\label{S:Model}

\subsection{The model}\label{SS:model}

The inventory-based price impact model we consider was introduced in \cite{Ho198147, stoll1978} and further developed in \cite{MR3320327,MR3375887}. There is a complete probability space $\probtriple$ which supports a $d$-dimensional Brownian motion $B$, and we denote by $\filt$ the $\prob$-augmentation of the natural filtration for $B$, so $\filt$ satisfies the usual conditions of right continuity and saturation at $0$ of all $\prob$-null sets.

The time horizon is $T>0$, and there are a finite number of securities with terminal payoff denoted by the $k$-dimensional, $\F_T$ measurable random vector $\Psi$\footnote{Throughout, all random variables are assumed $\F_T$ measurable.}. Transactions on the securities are made through a collection of market makers, who quote demand-based prices. The market makers are risk averse, have random endowments, and for a given order $q\in\reals^k$, (aggregately) ask a price $X(q)\in\reals$ according to two conditions: (i) the total order, price and random endowment are distributed between market makers in a Pareto-optimal way; and (ii) each market maker remains at indifference (i.e.~her expected utility remains unchanged) after the transaction. Every time an order is satisfied by the market makers, their inventory and hence pricing rule changes. In continuous time, both the order flow $Q=\{Q_t\}_{t\leq T}$ and the cash balance $X(Q) = \{X_t(Q)\}_{t\leq T}$ are adapted stochastic processes.

We assume exponential (CARA) preferences for the market makers. This allows us to consider hereafter a single (representative) market maker. Indeed, it is well-known (see \cite{MR2211128, Bor62, Wil68}) that the Pareto-optimal sharing problem of exponential utility maximizers can be written as an optimization problem of a representative agent, whose utility is again exponential, and whose risk tolerance and endowment are, respectively, the sum of individual market makers' risk tolerances and endowments. Furthermore, condition (ii) above implies the cash balance asked by the market makers is the indifference price of the representative agent (c.f.~\cite[Theorem 3.1]{MR3320327}, \cite[Theorem 4.9]{MR3375887}).

\begin{rem}
As argued in \cite{stoll1978}, the assumption of market makers' indifference pricing is economically reasonable. It is justified not only by intense competition among market makers (see the related discussion in \cite{AnthBanGok18} and \cite{MR3320327}), but also by the well-documented fact that market makers are willing to offer better prices to large investors (see \cite{AitGarSwa95, GoldIrvKanWie09} and the references therein). In our setting, ``better price'' means each market maker asks for the minimum price which compensates him for the position (i.e.~her indifference value).
\end{rem}

The representative market maker has risk aversion $\gamma>0$ and endowment $\Sigma_0$. We make the standing assumption on $\Sigma_0$, $\gamma$ and $\Psi$ (c.f.~\cite[Assumption 2.4]{MR3375887}, \cite[equation (15)]{MR3005018}).
\begin{ass}\label{A: integrability}
For all $q>0$, $\espalt{}{}{e^{-\gamma \Sigma_0 + q|\Psi|}} < \infty$.
\end{ass}

Consider first a single period model. As in \cite{MR3320327}, we take the market maker's perspective, and assume an investor submits an order for $-q\in\reals^k$ units of $\Psi$.  By assumption, the market maker quotes a cash balance $X(q)$ to remain at indifference to the order.  Since her endowment changes from $\Sigma_0$ to $\Sigma_0 + X(q) + q'\Psi$, $X(q)$ must satisfy
\begin{equation*}
\espalt{}{}{-e^{-\gamma(\Sigma_0 + X(q) + q'\Psi)}} = \espalt{}{}{-e^{{-\gamma\Sigma_0}}}.
\end{equation*}
In other words, $X(q)$ is the (seller's) indifference price for $-q$ units of $\Psi$, under risk aversion $\gamma$ and endowment $\Sigma_0$. For the investor, the wealth associated to the order at the end of the period is $V(q) = -q'\Psi - X(q)$.

In continuous time (c.f.~\cite{MR3375887}), an order flow is submitted by the investor, in the form of an adapted stochastic process $Q$, and the cash balance $X(Q)$ is an induced stochastic process. Here, for each $t\in[0, T]$, $Q_t$ is the cumulative number of securities sold by the investor to the market maker, and $X_t(Q)$ is the cumulative cash balance. As in the static case, $X(Q)$ is determined through the principle of market maker indifference. For a given $Q$, we denote by $V(Q) = \{V_t(Q)\}_{t\leq T}$ the investor's gains process, in that $V_t(Q)$ is the cash amount the investor will receive if she sells her cumulative order (see \cite[Equation (4.19)]{MR3375887}). In analogy to static case, at the terminal time  $V_T(Q)= -Q_T'\Psi-X_T(Q)$.

To precisely identify $V(Q)$, we need to introduce some notation. In view of Assumption \ref{A: integrability}, for all $q\in\reals^k$ the process
\begin{equation*}
N_t(q) \dfn \condespalt{}{e^{-\gamma\Sigma_0 - \gamma q'\Psi}}{\F_t};\qquad t\leq T,
\end{equation*}
is a strictly positive martingale, and hence by predictable representation, we may write
\begin{equation}\label{E:Nq}
\frac{N_t(q)}{N_0(q)} =  \E\left(\int_0^\cdot H_s(q)'dB_s\right)_t;\qquad t\leq T,
\end{equation}
for some predictable process $H(q)$. The map $(t,\omega,q)\to H_t(q)(\omega)$ is $\mathcal{P}(\filt)\otimes \mathcal{B}(\reals^k)$ measurable, where $\mathcal{P}(\filt)$ is the $\filt$-predictable sigma-algebra, and in fact, $H(q)$ is regular in $q$ as described in \cite[Lemma 5.5]{MR3005018}, as well as in \cite{PulidoKramkov17}. Next, define the class of processes
\begin{equation*}
\mathcal{A}_{PI}\dfn \cbra{Q \in \mathcal{P}(\filt) \such \int_0^T |H_t(Q_t)|^2 dt < \infty \textrm{ a.s.}}.
\end{equation*}
Using \cite[Theorem 3.2, Section 5.2]{MR3005018}, \cite[Theorem 4.9]{MR3375887}, we obtain the following representation for $V(Q)$ in terms of $H(Q)$. The proof of Lemma \ref{L:PI_gains_process} is in Appendix \ref{AS:Proofs}.

\begin{lem}\label{L:PI_gains_process}
Let Assumption \ref{A: integrability} hold. Then, for $Q\in\A_{PI}$, $V(Q)$ is well defined with
\begin{equation}\label{E:V_good}
\begin{split}
V_t(Q) &=\frac{1}{\gamma}\int_0^t (H_s(Q_s)-H_s(0))'(dB_s-H_s(0)ds) - \frac{1}{2\gamma}\int_0^t\left|H_s(Q_s)-H_s(0)\right|^2ds;\quad t\leq T.
\end{split}
\end{equation}
\end{lem}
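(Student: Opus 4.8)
The plan is to combine the abstract gains-process representation supplied by the cited references with the explicit exponential structure of the family $N(q)$, and the first thing I would settle is well-definedness. By \cite[Lemma 5.5]{MR3005018}, the map $(t,\omega,q)\mapsto H_t(q)(\omega)$ may be chosen jointly measurable and regular in $q$, so for a predictable order flow $Q$ the process $s\mapsto H_s(Q_s)$ is predictable. Since $Q\in\A_{PI}$ guarantees $\int_0^T|H_s(Q_s)|^2\,ds<\infty$ a.s., and since $N(0)=\condespalt{}{e^{-\gamma\Sigma_0}}{\F_\cdot}$ is, under Assumption \ref{A: integrability}, a strictly positive true martingale whose representation integrand satisfies $\int_0^T|H_s(0)|^2\,ds<\infty$ a.s., both stochastic integrals and the finite-variation integral in \eqref{E:V_good} are a.s.\ finite. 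This is precisely what ``$V(Q)$ is well defined'' should mean.

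Next I would record the structural form that makes the identification transparent. From \eqref{E:Nq} one has $\log N_t(q)=\log N_0(q)+\int_0^t H_s(q)'dB_s-\tfrac12\int_0^t|H_s(q)|^2\,ds$. Introducing the measure $\qprob_0$ with $d\qprob_0/d\prob=N_T(0)/N_0(0)$, Girsanov makes $\wt B_t:=B_t-\int_0^t H_s(0)\,ds$ a Brownian motion, and the right-hand side of \eqref{E:V_good} rewrites as $\gamma V_t(Q)=\int_0^t(H_s(Q_s)-H_s(0))'d\wt B_s-\tfrac12\int_0^t|H_s(Q_s)-H_s(0)|^2\,ds$; equivalently, $e^{\gamma V(Q)}$ is the $\qprob_0$-stochastic exponential of $\int(H(Q_\cdot)-H(0))'d\wt B$. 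It then remains to identify this candidate with the gains process furnished by \cite[Theorem 4.9]{MR3375887} and \cite[Theorem 3.2, Section 5.2]{MR3005018}, which characterize $V(Q)$ through the representative market maker's dynamic indifference valuation; for CARA preferences that valuation is carried entirely by the family $N(q)$, so the two should coincide after the exponential substitution above.

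Finally I would verify the formula and pin down the normalization on the constant-order case before passing to the general one. For $Q\equiv q$, a direct It\^o expansion shows the $ds$-terms telescope and $\gamma V_t(q)=\log\tfrac{N_t(q)}{N_t(0)}-\log\tfrac{N_0(q)}{N_0(0)}$; evaluating at $t=T$ gives $V_T(q)=-q'\Psi-\tfrac1\gamma\log\tfrac{N_0(q)}{N_0(0)}=-q'\Psi-X(q)$, recovering the single-period terminal wealth and confirming $V_0(q)=0$. The time-varying case then follows from the instantaneous indifference dynamics of the cited theorems: each marginal order executes at its fair marginal indifference price, so a change of position contributes nothing to the liquidation value, and $V_t(Q)$ accrues only the holding gains of the current position $Q_s$, priced through $H_s(Q_s)$. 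I expect the main obstacle to be exactly this last identification --- matching the explicit CARA expression \eqref{E:V_good} to the abstract gains process for a general, time-dependent $Q$ --- together with propagating the well-definedness of the first step through only the $\A_{PI}$ hypothesis, which is what forces the appeal to the regularity of $q\mapsto H(q)$.
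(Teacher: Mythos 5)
Your outline matches the paper's strategy in broad strokes---delegate existence to \cite[Theorem 3.2]{MR3005018} and \cite[Theorem 4.9]{MR3375887}, then extract \eqref{E:V_good} from the exponential structure of $N(q)$---and your constant-order computation ($\gamma V_t(q)=\log\tfrac{N_t(q)}{N_t(0)}-\log\tfrac{N_0(q)}{N_0(0)}$, recovering $V_T(q)=-q'\Psi-X(q)$) is correct. But the step you yourself flag as the main obstacle, identifying the candidate with the abstract gains process for a general time-varying $Q$, is precisely the content of the paper's proof, and your resolution of it (``each marginal order executes at its fair marginal indifference price, so a change of position contributes nothing to the liquidation value'') is a heuristic rather than an argument. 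As written, nothing in the proposal excludes the possibility that the true gains process differs from your candidate by a finite-variation term generated by re-pricing the inventory as $Q$ moves; the single-period identity you verified does not by itself control this.

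The paper closes the gap by importing the explicit parametrization of \cite{MR3375887}: for CARA preferences, $F_t(v,x,q)=\condespalt{}{-ve^{-\gamma(\Sigma_0+x+q'\Psi)}}{\F_t}=-ve^{-\gamma x}N_t(q)$ with conjugate $G_t(u,y,q)=\frac{y}{\gamma}\log\left(\frac{N_t(q)}{-u}\right)$; the market maker's expected utility process $U_t=(\partial/\partial v)F_t$ of \cite[Equations (3.19), (4.16)]{MR3375887} solves the affine SDE $dU_t/U_t=H_t(Q_t)'dB_t$ with $U_0=-N_0(0)$, so $U_t=-N_0(0)\,\E\left(\int_0^\cdot H_s(Q_s)'dB_s\right)_t$ for $Q\in\A_{PI}$; and \cite[Equation (4.19)]{MR3375887} gives $V_t(Q)=-G_t(U_t,1,0)=-\frac{1}{\gamma}\log\left(\frac{N_t(0)}{-U_t}\right)$. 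Expanding the two stochastic exponentials yields \eqref{E:V_good}. The fact that the position change contributes nothing is thus encoded in $U$ being driven solely by $H_t(Q_t)'dB_t$ with no $dQ$ term---a theorem of the cited reference, which is the rigorous counterpart of your ``instantaneous indifference'' claim and is what your proof needs to invoke explicitly. A secondary point: the paper reads ``well defined'' as existence of the gains process itself (supplied by the cited theorems for predictable $Q$), not merely finiteness of the integrals on the right-hand side of \eqref{E:V_good}, which is how your first paragraph treats it.
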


\subsection{Price impact and constrained investment}\label{SS:constrain}

We now show how wealth processes $V(Q)$ for $Q\in \A_{PI}$ are directly related to wealth processes $X(\pi)$ obtained in a fictitious model with no price impact, but where there is constrained trading.  This fact was first shown (for a single asset) in the PhD thesis \cite{said2007}. It was subsequently observed (also for one asset) in the PhD thesis \cite{sah2014hedging}, as well as the unpublished \cite{fukasawa18}. It was also implicitly alluded to, albeit in a much more abstract setting, in the recent  \cite{MR3778361} (c.f.~Condition 3.7). Here, we make the observation both more formal and general, by allowing for a vector of securities. We emphasize the discussion on the constraint set and exploit this representation to analyze the problem of contingent claim pricing and hedging under price impact.

Consider a fictitious market where the money market account is set to $1$. There are $d$ tradeable assets with an exogenously given price process $S$ evolving according to
\begin{equation}\label{E:S_dynamics}
\frac{dS_t}{S_t} = \lambda_tdt+dB_t;\qquad t\leq T,
\end{equation}
for a (to-be-determined) predictable $d$-dimensional process $\lambda$, such that $\int_0^T |\lambda_t|^2 dt < \infty$, a.s.. Provided the requisite integrability, $\lambda$ is the unique market price of risk and, by construction, there is a unique measure $\qprob_0$ on $\F_T$ under which $S$ is a true martingale. $\qprob_0$ has density
\begin{equation}\label{E:rn_deriv_lambda}
\left.\frac{d\qprob_0}{d\prob}\right\vert_{\F_T}= \mathcal{E}\left(-\int_0^\cdot \lambda_t'dB_t\right)_T.
\end{equation}
Self-financing trading strategies are denoted by $\pi$, where $\pi^{i}_t$ is the \textit{proportion} of wealth invested in $S^i$ at $t$, $i=1,...,d$. We assume $\pi\in\mcp(\filt)$ is such that $\int_0^T |\pi_t^2|dt < \infty$ a.s..  The wealth process induced by $\pi$ has dynamics
\begin{equation*}
\frac{dX_t(\pi)}{X_t(\pi)} = \pi_t'\left(\lambda_tdt+dB_t\right);\qquad t\leq T,
\end{equation*}
so that, with initial wealth $X_0 = e^{\gamma x}$, the log wealth process is
\begin{equation}\label{E:cons_wealth}
\log\left(X_t(\pi)\right) = \gamma x+ \int_0^t \pi_t'(dB_t+\lambda_tdt) - \frac{1}{2}\int_0^t |\pi_t|^2dt;\qquad t\leq T.
\end{equation}
Now, if we define $\lambda\dfn -H(0)$, assume $\pi = H(Q)-H(0)$, and compare \eqref{E:cons_wealth} with \eqref{E:V_good} we see that, as processes,
\begin{equation}\label{E:wealth_repre}
x+ V(Q) = \frac{1}{\gamma}\log(X(\pi)).
\end{equation}
In other words, there is a direct connection between the wealth process (induced by $\pi$) in the fictitious market \eqref{E:S_dynamics} and the gains process (induced by $Q$) in the market with price impact. The requirement $\pi_t = H_t(Q_t)-H_t(0)$  is similar to the invertibility conditions in  \cite[Condition 3.7]{MR3778361}, \cite[Section 4]{fukasawa18}, and the bounded-ness condition in \cite[Proposition 3.5]{sah2014hedging}.

Clearly, for $Q\in\A_{PI}$ we can construct $\pi$. To go in the reverse direction, we must have for $\leb_{[0,T]}$ almost every $t\leq T$ that $\prob$ almost surely $\pi_t$ lies in the random constraint set $\K^o_t$, where
\begin{equation}\label{E:constraint_set}
\begin{split}
\K_t &\dfn \cbra{H_t(q)\such q\in\reals^k};\qquad \K^o_t \dfn \cbra{H_t(q)-H_t(0)\such q\in \reals^k}.
\end{split}
\end{equation}
Therefore, we define the acceptable strategies
\begin{equation*}
\begin{split}
\mathcal{A}_{C}\dfn \cbra{\pi \in \mcp(\filt) \such \int_0^T |\pi_t|^2 dt < \infty, \pi \in \K^o, \ \leb_{[0,T]}\times\prob \textrm{ a.s.}}.
\end{split}
\end{equation*}
In \cite[Lemma 5.5]{MR3005018} the map $q\to H(q)$ was shown to be continuous (in fact, the map is much more regular: see \cite{PulidoKramkov17}). Thus, by the Kuratowski-Ryll-Nardzewski measurable selection theorem for any $\theta\in\mathcal{P}(\filt)$ such that $\theta_t \in \K_t$ a.s. for $t\leq T$, we can select $Q\in\mathcal{P}(\filt)$ so that $H(Q) = \theta$, $\leb_{[0,T]}\times\prob$ a.s.. As such we have the equivalence $Q\in\A_{PI} \Leftrightarrow \pi\in \A_{C}$ for $\pi = H(Q)-H(0)$, and that $x+V(Q) = (1/\gamma)\log\left(X(\pi)\right)$.

\subsection{The constraint set} \label{SS:cons_set} The constraint-set process $\K^o$ of \eqref{E:constraint_set} is determined by $\gamma,\Sigma_0$ and $\Psi$. In this section we construct three examples which show generally, for fixed $(t,\omega)$, that $\K^o_t(\omega)$ is neither closed or convex.  This departs from the literature of utility maximization problem under constraints, where both closed-ness and convexity are assumed, and creates a severe problem when it comes to solving the investor's optimal investment problem under price impact, as will be discussed in the next section. In the first example there is no constraint as $\K^o_t = \reals^d$; in the second $\K^o_t$ is not closed; and in the third $\K^o_t$ is not convex.

\begin{exa}\label{Ex:cons_bach} \textit{Bachelier model: no constraint}.  Similarly to \cite[Example 4.11]{MR3375887}, let $k=d$, $\Sigma_0 = \int_0^T f_t'dB_t$ and $\Psi = \int_0^T \psi_t' dB_t$, where $f\in L^2([0,T];\reals^d)$ and $\psi\in L^2([0,T];\reals^{d\times d})$.  Assume $\psi_t$ is invertible, and the map $t\to \psi_t^{-1}$ is continuous. Clearly, Assumption \ref{A: integrability} holds, and calculation shows
\begin{equation*}
\frac{N_t(q)}{N_0(q)} = \mathcal{E}\left(-\gamma\int_0^t (f_s+\psi_sq)'dB_s\right);\qquad t\leq T.
\end{equation*}
Thus $H_t(q) = -\gamma(f_t + \psi_tq)$, and hence $\K_t = \K^o_t = \reals^d$ with $\pi_t = H_t(Q_t)-H_t(0)$ if and only if $Q_t = -(\gamma\psi_t)^{-1}\pi_t$.
\end{exa}

\begin{exa}\label{Ex:cond_dig} \textit{Digital claim: a non-closed set}. Consider when $k=d=1$, $\Sigma_0 = 0$, and $\Psi = 1_{B_T\geq 0}$.  Here we have on $\cbra{B_t=b}$ and for $\tau=T-t$:
\begin{equation}\label{E:cons_set_digital}
\K_t = \K^o_t = \frac{1}{\sqrt{\tau}}\varphi\left(\frac{b}{\sqrt{\tau}}\right)\times \left(-\frac{1}{\Phi\left(\frac{b}{\sqrt{\tau}}\right)}, \frac{1}{1-\Phi\left(\frac{b}{\sqrt{\tau}}\right)}\right),
\end{equation}
where $\Phi$ is the cumulative distribution function of a standard normal random variable, and $\varphi$ its probability density function.  The left endpoint above occurs as $q\downarrow -\infty$, while the right endpoint occurs as $q\uparrow\infty$. To verify \eqref{E:cons_set_digital}, we note for $\Sigma_0 =0$, $\Psi = f(B_T)$ with $f$ bounded and measurable, it follows that $H_t(q) = -\partial_b v(t,B_t;q)$, where $v(t,b;q)\dfn -\log\left(\espalt{}{}{e^{-\gamma qf(B_T)}\such B_t = b}\right)$.  Indeed, this can be shown using It\^{o}'s formula and the transition density for $B$.  In this instance
\begin{equation*}
v(t,b;q) = -\log\left(e^{-\gamma q} + \Phi\left(-\frac{b}{\sqrt{T-t}}\right)\left(1-e^{-\gamma q}\right)\right).
\end{equation*}
The result follows by differentiation, the monotonicity of $-\partial_b v(t,B_t;q)$ with respect to $q$, and taking $q\uparrow\infty$, $q\downarrow-\infty$.
\end{exa}

\begin{exa}\label{Ex:cons_dig_2d} \textit{Two dimensional digital and linear claim: a closed but non-convex set.} Consider when $k=d=2$, $\Sigma_0 = 0$ and $\Psi = \left(B^{1}_T, B^{1}_T 1_{B^{2}_T\geq 0}\right)$.  Here, with $\tau=T-t$, and on $\cbra{B^{1}_t=b_1,B^{2}_t = b_2}$ define $A = \Phi\left(-b_2/\sqrt{\tau}\right)$ and $C = \sqrt{\tau}/\phi\left(-b_2/\sqrt{\tau}\right)$.
Then, $\K_t = \K^o_t = \cbra{(p_1,p_2)}$ where $p_1$, $p_2$ must satisfy
\begin{equation*}
\begin{split}
-\frac{1}{(1-A)C} &< p_2 < \frac{1}{AC};\quad \left(\tau p_1-b_1\right)^2 \geq 2\tau\left(1-2(1-ACp_2)(1-A)\right)\log\left(\frac{1+(1-A)Cp_2}{1-ACp_2}\right).
\end{split}
\end{equation*}
It is easy to see $\K^o_t$ is not convex. Indeed, choose $p_2$ close enough to $1/(AC)$ to make the right hand side of the second equality above strictly positive.  Then, $(p_1,p_2)\not\in\K^o_t$ for $p_1$ near $b_1/\tau$ while $(p_1,p_2)\in\K^o_t$ for $|p_1|$ large enough. This is shown in Figure \ref{F:2d_cons}. Therein, when solving $H(q) = p$, in the interior of the shaded region, there are two solutions; along the boundary, but not where the lines cross, there is a unique solution; and where the lines cross there is an uncountable family of solutions.  This shows the constraint set is closed, but also that typically there will not be a unique optimal demand policy for the large investor's optimal investment problem.
\begin{figure}
\includegraphics[height=5cm,width=8cm]{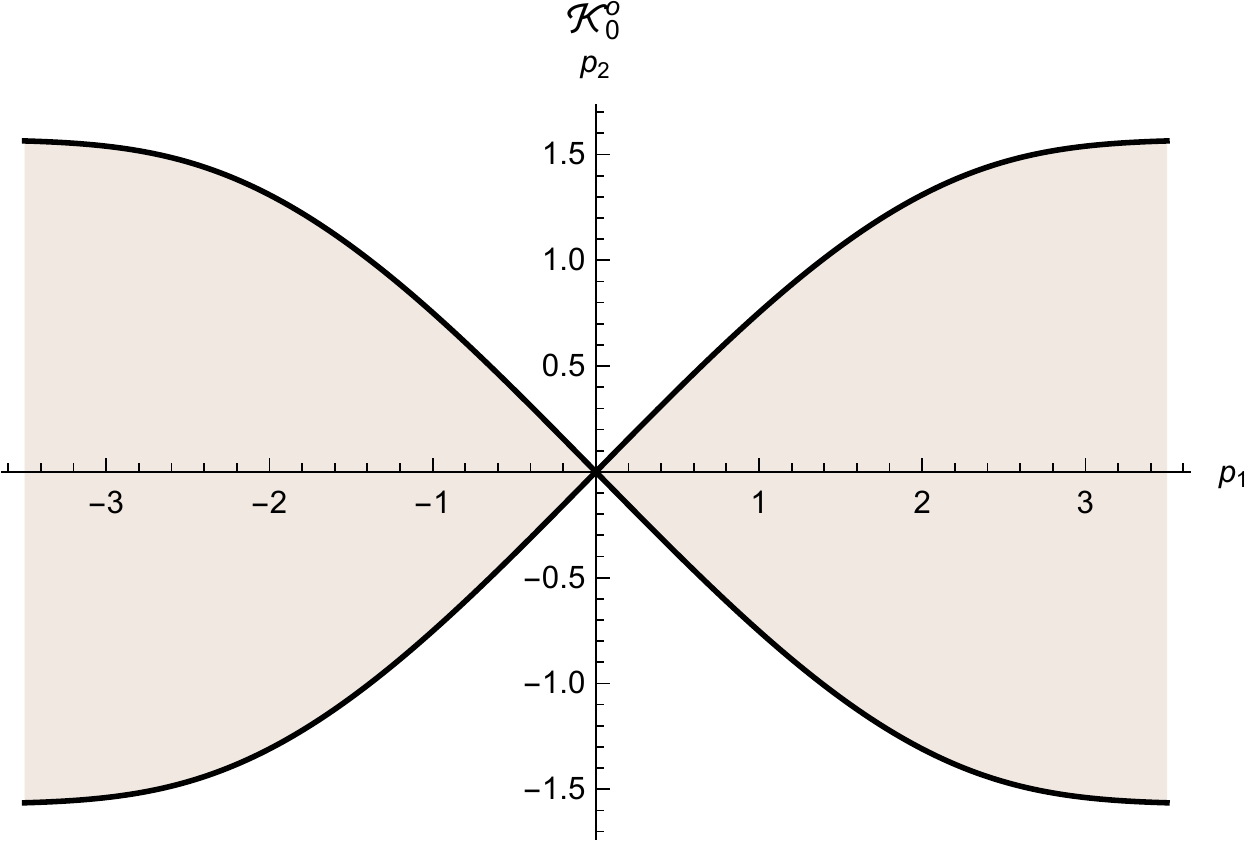}
\caption{$\K^o_0$ for the two-dimensional claim of Example \ref{Ex:cons_dig_2d}.} \label{F:2d_cons}
\end{figure}
\end{exa}

\section{Optimal Investment with Price Impact}\label{S:cara_li}


We now consider the investor's optimal investment problem. We assume she has preferences described by a utility function $U$ defined on the real line, which is $C^2$, strictly concave, and satisfies the conditions of reasonable asymptotic elasticity (c.f.~\cite{MR1865021}). In view of Lemma \ref{L:PI_gains_process}, for $Q\in\A_{PI}$ the wealth process under price impact is well defined, and in turn we can define her value function, for a given initial capital $x$ and endowment $\Sigma_1$ by
\begin{equation}\label{E:PI_opt}
u(x;\Sigma_1) \dfn \sup_{Q\in\mathcal{A}_{PI}}\espalt{}{}{U\left(x+ V_T(Q) + \Sigma_1 \right)}.
\end{equation}
To ensure $u(x;\Sigma_1)$ is well defined, we assume
\begin{ass}\label{A: large_investor_integ_abs}
$\espalt{}{}{\left(U(x+\Sigma_1)\right)^{-}} < \infty$ for all $x\in\reals$.
\end{ass}
Based upon the results of Section \ref{SS:constrain}, we can equate $u(x;\Sigma_1)$ with the value function for an optimal investment problem in the fictitious model with constrained trading, where the investor's preferences are described by the utility field
\begin{equation*}
\tilde{U}(w,\omega) \dfn U\left(\frac{1}{\gamma}\log(w) + \Sigma_1(\omega)\right);\qquad w>0, \omega\in\Omega.
\end{equation*}
It is easy to verify that for $\omega$ fixed, $\tilde{U}(w,\omega)$ is strictly increasing, concave, and satisfies the asymptotic elasticity conditions in \cite{MR1722287}. Furthermore we know that $Q\in \A_{PI} \Leftrightarrow \pi\in\A_{C}$ for $\pi = H(Q)-H(0)$, and using \eqref{E:wealth_repre} we conclude, for initial wealth $x$ in the price impact model and $e^{\gamma x}$ in the fictitious model
\begin{equation*}
\tilde{U}\left(X_T(\pi),\cdot\right) = U\left(\frac{1}{\gamma}\log\left(X_T(\pi)\right) + \Sigma_1\right) = U\left(x + V_T(Q) + \Sigma_1\right).
\end{equation*}
For comparison purposes, we also define the unconstrained problem in the fictitious market, setting
\begin{equation*}
\begin{split}
\mathcal{A}&\dfn \cbra{\pi \in\mathcal{P}(\filt)\such  \int_0^T |\pi_t|^2dt < \infty \textrm{ a.s.}}.
\end{split}
\end{equation*}
Using $\tilde{U}$ in the fictitious market, we define the value functions
\begin{equation}\label{E:opt_cons_no_cons}
\begin{split}
\tilde{u}(x;\Sigma_1) &\dfn \sup_{\pi\in\mathcal{A}} \espalt{}{}{\tilde{U}\left(X_T(\pi),\Sigma_1\right)\such X_0 = e^{\gamma x}};\\
\tilde{u}_C(x;\Sigma_1) &\dfn \sup_{\pi\in\mathcal{A}_C} \espalt{}{}{\tilde{U}\left(X_T(\pi),\Sigma_1\right)\such X_0 = e^{\gamma x}},
\end{split}
\end{equation}
and based on the above, we immediately have
\begin{prop}\label{T:PI_to_Cons}
Let Assumptions \ref{A: integrability} and \ref{A: large_investor_integ_abs} hold, and for $x\in\reals$ let $u(x;\Sigma_1)$ be from \eqref{E:PI_opt}. Then
$$u(x;\Sigma_1) = \tilde{u}_C(x;\Sigma_1) \leq \tilde{u}(x;\Sigma_1).$$
\end{prop}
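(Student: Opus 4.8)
The plan is to obtain both relations directly from the correspondence set up in Section~\ref{SS:constrain}, so that the proof reduces to a change of variables together with one set inclusion. There are two assertions to verify: the equality $u(x;\Sigma_1)=\tilde u_C(x;\Sigma_1)$, and the inequality $\tilde u_C(x;\Sigma_1)\le\tilde u(x;\Sigma_1)$.

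For the equality I would use the objective-preserving correspondence $\Phi\colon Q\mapsto H(Q)-H(0)$. By the construction in Section~\ref{SS:constrain}, $\Phi$ maps $\A_{PI}$ into $\A_C$ and, crucially, is onto: given $\pi\in\A_C$ one has $\pi_t+H_t(0)\in\K_t$, $\leb_{[0,T]}\times\prob$ a.s., so the Kuratowski--Ryll--Nardzewski selection theorem (applicable since $q\mapsto H_t(q)$ is continuous) yields $Q\in\mcp(\filt)$ with $H(Q)=\pi+H(0)$; a triangle-inequality bound $\int_0^T|H_t(Q_t)|^2dt\le 2\int_0^T|\pi_t|^2dt+2\int_0^T|H_t(0)|^2dt<\infty$, using $\int_0^T|H_t(0)|^2dt=\int_0^T|\lambda_t|^2dt<\infty$, then places $Q$ in $\A_{PI}$. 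Along any such pair, \eqref{E:wealth_repre} gives $x+V_T(Q)=(1/\gamma)\log X_T(\pi)$, $\prob$-a.s., whence, by the very definition of $\tilde U$,
\begin{equation*}
U\!\left(x+V_T(Q)+\Sigma_1\right)=\tilde U\!\left(X_T(\pi),\cdot\right)\qquad \prob\text{ a.s.}
\end{equation*}
Since the two sides are $\prob$-a.s.\ identical, their expectations coincide in $[-\infty,\infty]$, Assumption~\ref{A: large_investor_integ_abs} ensuring the suprema are well defined; taking the supremum over $\A_{PI}$ on the left and over its image $\Phi(\A_{PI})=\A_C$ on the right yields $u(x;\Sigma_1)=\tilde u_C(x;\Sigma_1)$.

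The inequality is then immediate: since $\A_C\subseteq\A$, the supremum defining $\tilde u_C$ is taken over a smaller admissible class than that defining $\tilde u$, so $\tilde u_C(x;\Sigma_1)\le\tilde u(x;\Sigma_1)$. The only nontrivial ingredient is the surjectivity of $\Phi$ onto $\A_C$ — the assertion that \emph{every} constrained strategy is realized by a genuine order flow — which rests on the measurable selection theorem; everything else is the bookkeeping already recorded in Section~\ref{SS:constrain}, which is precisely why the result can be stated as an immediate consequence.
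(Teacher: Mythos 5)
Your proposal is correct and follows essentially the same route as the paper: the paper states the proposition as an immediate consequence of the correspondence $Q \leftrightarrow \pi = H(Q)-H(0)$ between $\A_{PI}$ and $\A_C$ established in Section~\ref{SS:constrain} (via the continuity of $q\mapsto H(q)$ and measurable selection), the pathwise identity \eqref{E:wealth_repre}, and the inclusion $\A_C\subseteq\A$. Your write-up merely makes explicit the surjectivity and integrability bookkeeping that the paper leaves implicit.
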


The unconstrained utility maximization problem for random endowments and utility fields has been well studied. For endowments with utility functions on $(0,\infty)$ we highlight \cite{MR2052905},  while for utility fields, we pay particular attention to \cite{MR3292127}, which contains necessary and sufficient conditions for existence of optimal policies, in the general incomplete setting. Presently, the analysis greatly simplifies as the unconstrained market is complete.  Indeed,
\begin{equation}\label{E:No_cons_opt_static}
\tilde{u}(x;\Sigma_1) = \sup_{\xi \in L^0(\F_T)}\cbra{\espalt{}{}{U(x+\xi+\Sigma_1)} \such \espalt{}{0}{e^{\gamma \xi}}\leq 1},
\end{equation}
where $\mathbb{E}_0$ is expectation with respect to $\qprob_0$ defined by (recall \eqref{E:rn_deriv_lambda} and $\lambda = -H(0)$)
\begin{equation}\label{E:rn_deriv}
\left.\frac{d\qprob_0}{d\prob}\right\vert_{\F_T} = \frac{e^{-\gamma \Sigma_0}}{\espalt{}{}{e^{-\gamma\Sigma_0}}} = \mathcal{E}\left(\int_0^\cdot H_t(0)'dB_t\right)_T.
\end{equation}
To see this, for $\pi\in \A$ set $\xi = (1/\gamma)\log(X_T(\pi)) - x$.  Alternatively, for $\xi$ such that $\espalt{}{0}{e^{\gamma\xi}}\leq 1$ there is $\pi\in\mcp(\filt)$ satisfying $\mathcal{E}\left(\int_0^\cdot \pi_t'(dB_t-H_t(0)dt)\right)_T = e^{\gamma \xi}/\espalt{}{0}{e^{\gamma\xi}}$, and $(1/\gamma)\log(X_T(\pi)) \geq x+\xi$.  We record for later use that \eqref{E:V_good} implies
\begin{equation}\label{E:wealth_bc}
\espalt{}{0}{e^{\gamma V_T(Q)}}\leq 1,
\end{equation}
and this budget constraint has major ramifications. Now, the utility maximization problem with random constraints is also by now standard.  However, most of the literature (c.f.~\cite{cvitanic1992cdc}, \cite{MR1640352}), assumes the constraint set is both closed and convex a.s.~for $t\leq T$.  Closed-ness is used to obtain an optimal policy when the constraint binds, and convexity is used when constructing the auxiliary markets with no constraints.  However, as shown in Section \ref{SS:cons_set}, in general one cannot assert $\K^o_t$ of \eqref{E:constraint_set} has either of these properties.

\begin{rem}\label{R:dichotomy} In light of the above, there is a dichotomy when solving the optimal investment problem in this class of price impact models.  Indeed, under very mild conditions upon $U$ and $\Sigma_1$ (see, for example, \cite{MR2489605}), there is an optimal policy $\hat{\pi}$ to the unconstrained problem in \eqref{E:opt_cons_no_cons}.  Then, in the constrained case, we have two alternatives:
\begin{itemize}
\item [\textit{I.}] $\hat{\pi}\in\K^o$, $\Leb_{[0,T]}\times \prob$ almost surely.
\item [\textit{II.}] There exists a set $(a,b)\times E\in [0,T]\times \F_T$ with $a < b$ and $\prob\bra{E}>0$ s.t.~$\hat{\pi}_t \not\in \K^o_t$ on $E\times (a,b)$.
\end{itemize}
In case \textit{I.}, the constrained problem admits the same answer as the unconstrained problem, and the constraint is effectively absent. In case \textit{II.}, the optimal investment problem falls outside the scope of most of the existing literature. In addition, in \textit{II.}, one may end up with a strange situation where the investor is induced to demand an infinite number of shares from the market maker, and the market maker, at least formally, is willing to quote a finite cash balance for this demand. For example, such a situation occurs in Example \ref{Ex:cond_dig}, when $\Sigma_1$ is formed precisely so that $\hat{\pi}$ is the left endpoint in \eqref{E:cons_set_digital} for $t\in (a,b)$.
\end{rem}

\subsection{Exponential preferences} The usefulness of Proposition \ref{T:PI_to_Cons} is highlighted when the investor has exponential utility. Here, exponential utility in the price impact model is connected to power utility in the fictitious constrained market. To proceed, we denote by $\alpha>0$ the absolute risk aversion of the investor and specify Assumption \ref{A: large_investor_integ_abs} to
\begin{ass}\label{A: large_investor_integ}
$\espalt{}{}{e^{-\alpha\Sigma_1}} < \infty$.
\end{ass}

Assumption \ref{A: large_investor_integ} allows us to employ the usual change of measure (c.f.~\cite{MR1891730}) with the endowment $\Sigma_1$, defining $\tprob$ on $\F_T$ by
\begin{equation}\label{E:tprob_def}
\left.\frac{d\tprob}{d\prob}\right\vert_{\F_T} = \frac{e^{-\alpha\Sigma_1}}{\espalt{}{}{e^{-\alpha\Sigma_1}}}.
\end{equation}
Using $\tprob$ and that $e^{\alpha((1/\gamma) \log(X_T(\pi)))} = X_T(\pi)^{-\alpha/\gamma}$, straight-forward calculations imply the following result (where $\wt{\mathbb{E}}$ stands for the expectation under measure $\tprob$).
\begin{prop}\label{P:power} Let Assumptions \ref{A: integrability} and \ref{A: large_investor_integ} hold and recall $\tilde{u}$ and $\tilde{u}_C$  from \eqref{E:opt_cons_no_cons}. Then, taking $U(x)=-e^{-\alpha x}$, we have
\begin{equation}\label{E:cons_no_cons_opt_power}
\begin{split}
\tilde{u}(0;\Sigma_1) &=  \frac{\alpha}{\gamma}\espalt{}{}{e^{-\alpha\Sigma_1}}\left(\sup_{\pi\in\mathcal{A}} \tespalt{}{}{\frac{1}{p}\left(X_T(\pi)\right)^p\such X_0 = 1}\right);\\
\tilde{u}_C(0;\Sigma_1) &=  \frac{\alpha}{\gamma}\espalt{}{}{e^{-\alpha\Sigma_1}}\left(\sup_{\pi\in\mathcal{A}_C} \tespalt{}{}{\frac{1}{p}\left(X_T(\pi)\right)^p\such X_0 = 1}\right),\\
\end{split}
\end{equation}
where $p\dfn-\alpha/\gamma$.
\end{prop}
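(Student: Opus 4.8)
The plan is to prove this by direct substitution followed by a single change of measure; no optimization-theoretic input is needed beyond the definitions of $\tilde u$ and $\tilde u_C$ in \eqref{E:opt_cons_no_cons}. The whole content is that, for the exponential choice $U(x)=-e^{-\alpha x}$, the utility field $\tilde U$ factorizes into a state-dependent weight times a power of wealth, and that weight is, up to a normalizing constant, exactly the density defining $\tprob$.

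First I would substitute $U(x)=-e^{-\alpha x}$ into $\tilde U(w,\omega)=U\!\left(\tfrac1\gamma\log w+\Sigma_1(\omega)\right)$, giving
\begin{equation*}
\tilde U(w,\omega)= -\exp\!\left(-\alpha\Sigma_1(\omega)\right)\,w^{-\alpha/\gamma}= -e^{-\alpha\Sigma_1(\omega)}\,w^{p},\qquad p=-\alpha/\gamma.
\end{equation*}
Since $p=-\alpha/\gamma$ gives $\tfrac1p=-\gamma/\alpha$ and hence $\tfrac\alpha\gamma\cdot\tfrac1p=-1$, this rewrites as
\begin{equation*}
\tilde U(w,\omega)=\frac{\alpha}{\gamma}\,e^{-\alpha\Sigma_1(\omega)}\Big(\tfrac1p\,w^{p}\Big).
\end{equation*}
Note $p<0$, so $w\mapsto \tfrac1p\,w^p$ is strictly negative on $(0,\infty)$; the integrands below are therefore sign-definite and every expectation is well-defined in $[-\infty,0)$, which sidesteps any integrability subtleties in the interchange with the supremum.

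Next, evaluating at $w=X_T(\pi)$ and noting $x=0$ forces $X_0=e^{\gamma\cdot 0}=1$ (matching the normalization $X_0=1$ on the right-hand side of \eqref{E:cons_no_cons_opt_power}), I would write, for either $\pi\in\A$ or $\pi\in\A_C$,
\begin{equation*}
\espalt{}{}{\tilde U\!\left(X_T(\pi),\Sigma_1\right)}=\frac{\alpha}{\gamma}\,\espalt{}{}{e^{-\alpha\Sigma_1}\,\tfrac1p\,(X_T(\pi))^{p}}.
\end{equation*}
Assumption \ref{A: large_investor_integ} guarantees $\espalt{}{}{e^{-\alpha\Sigma_1}}\in(0,\infty)$, so $\tprob$ of \eqref{E:tprob_def} is a genuine probability measure equivalent to $\prob$ with $e^{-\alpha\Sigma_1}=\espalt{}{}{e^{-\alpha\Sigma_1}}\,\tfrac{d\tprob}{d\prob}$. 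Inserting this density and invoking the definition of $\wt{\mathbb E}$ yields
\begin{equation*}
\espalt{}{}{\tilde U\!\left(X_T(\pi),\Sigma_1\right)}=\frac{\alpha}{\gamma}\,\espalt{}{}{e^{-\alpha\Sigma_1}}\;\tespalt{}{}{\tfrac1p\,(X_T(\pi))^{p}}.
\end{equation*}

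Finally I would take the supremum over $\pi\in\A$ (respectively $\pi\in\A_C$). Because $\tfrac\alpha\gamma\,\espalt{}{}{e^{-\alpha\Sigma_1}}$ is a strictly positive, finite, $\pi$-independent constant, it pulls out of the supremum and produces exactly the two identities in \eqref{E:cons_no_cons_opt_power}; the constrained case is verbatim the unconstrained one with $\A$ replaced by $\A_C$. There is essentially no genuine obstacle: the only points requiring care are the strict positivity and finiteness of the prefactor (which is precisely Assumption \ref{A: large_investor_integ}, ensuring $\tprob$ is a bona fide probability measure), together with Assumption \ref{A: integrability} underpinning the well-posedness of the fictitious market through $\lambda=-H(0)$, and the sign-definiteness coming from $p<0$ that makes the factoring of the constant through the supremum unambiguous.
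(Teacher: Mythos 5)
Your proposal is correct and follows exactly the route the paper takes: the paper dispenses with a formal proof by noting that $e^{-\alpha\left(\frac{1}{\gamma}\log X_T(\pi)\right)}=X_T(\pi)^{-\alpha/\gamma}$ and the change of measure to $\tprob$ yield the result by ``straight-forward calculations,'' which is precisely the factorization $\tilde U(w,\omega)=\frac{\alpha}{\gamma}e^{-\alpha\Sigma_1(\omega)}\frac{1}{p}w^p$ and density substitution you carry out. Your added remarks on the sign-definiteness from $p<0$ and the finiteness of the prefactor under Assumption \ref{A: large_investor_integ} correctly justify the interchange of the constant with the supremum.
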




\subsection{Solving the optimal investment problem}

As we have just seen, price impact manifests itself in two ways. The first has to do with market incompleteness, induced when $K^o_t \neq \reals^d$. The second way is due to the altered budget constraint from \eqref{E:wealth_bc}, which differs from the ``traditional'' constraint   $\espalt{\qprob_0}{}{V(Q)_T}\leq 0$ in frictionless markets.


Therefore, even when the constraint is absent (i.e.~$K^o_t = \reals^d$ almost surely on $t\leq T$) and the market is complete, there is a non-trivial ``budget constraint'' effect due to price impact. Our interest lies in this budget constraint impact. Therefore, for the remainder of the article, we assume the constraint is non-binding (effectively absent). Keeping Assumptions \ref{A: integrability}, \ref{A: large_investor_integ} in force, H\"{o}lder's inequality ensures
\begin{equation}\label{E:opt_m_integ}
\espalt{}{}{e^{-\beta\left(\Sigma_1+\Sigma_0\right)}} < \infty;\qquad \frac{1}{\beta} \dfn \frac{1}{\alpha} + \frac{1}{\gamma},
\end{equation}
so $\beta$ is the representative risk aversion of the investor and market maker. Therefore, there exists $\hphi\in\mcp(\filt)$ verifying
\begin{equation}\label{E:opt_M_uncon}
\frac{e^{-\beta\left(\Sigma_1 + \Sigma_0\right)}}{\espalt{}{}{e^{-\beta\left(\Sigma_1+\Sigma_0\right)}}}=\mathcal{E}\left(\int_0^\cdot  \hphi_t'dB_t\right)_T.
\end{equation}
We make the following assumption on $\Sigma_0,\Psi$ and $\Sigma_1$, which ensures the constraint $\hat{\pi}\in\K^0$ is non-binding:
\begin{ass}\label{A:H2}
$\hphi\in\K$, $\Leb_{[0,T]}\times\prob$ almost surely.
\end{ass}

Assumption \ref{A:H2} states an implicit connection between endowments $\Sigma_0$ and $\Sigma_1$ and the tradeable assets $\Psi$. As shown below, it holds in many cases of practical interest, and clearly it causes the constrained and unconstrained optimization problems to coincide. Therefore we have the following result, the proof of which is in Appendix \ref{AS:Proofs}.

\begin{prop}\label{P:powerUnconstrained}
Let Assumptions \ref{A: integrability}, \ref{A: large_investor_integ} and \ref{A:H2} hold. Then, for each of the optimal investment problems in \eqref{E:cons_no_cons_opt_power}, the optimal trading strategy $\hat{\pi}\in\A_C$ is $\hat{\pi}  = \hphi - H(0)$, and the value functions \eqref{E:PI_opt} and \eqref{E:cons_no_cons_opt_power} coincide, taking the explicit value
\begin{equation}\label{E:vf_uncon}
\begin{split}
- \espalt{}{}{e^{-\gamma \Sigma_0}}^{-\frac{\alpha}{\gamma}} \times \espalt{}{}{e^{-\beta\left(\Sigma_0 + \Sigma_1\right)}}^{\frac{\alpha}{\beta}}.
\end{split}
\end{equation}
\end{prop}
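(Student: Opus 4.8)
The plan is to exploit Proposition \ref{T:PI_to_Cons} together with Assumption \ref{A:H2}: since Assumption \ref{A:H2} renders the constraint non-binding, it suffices to solve the \emph{unconstrained} complete-market problem explicitly and then check a posteriori that its optimizer lies in $\A_C$. Taking $U(x) = -e^{-\alpha x}$ and working with the static reformulation \eqref{E:No_cons_opt_static} (with $x=0$, so $X_0 = 1$), I would maximize $\espalt{}{}{-e^{-\alpha(\xi + \Sigma_1)}}$ over $\xi \in L^0(\F_T)$ subject to the budget constraint $\espalt{}{0}{e^{\gamma\xi}} \le 1$, with $\qprob_0$ given by \eqref{E:rn_deriv}. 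This is the problem $\tilde{u}(0;\Sigma_1)$; solving it and confirming admissibility of the optimizer simultaneously handles $\tilde{u}_C(0;\Sigma_1)$ and, through Proposition \ref{T:PI_to_Cons}, the value $u(0;\Sigma_1)$ of \eqref{E:PI_opt}.

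First I would carry out a pointwise Lagrangian optimization. Writing the constraint as $\espalt{}{}{e^{-\gamma\Sigma_0 + \gamma\xi}} \le \espalt{}{}{e^{-\gamma\Sigma_0}}$ and differentiating $-e^{-\alpha\xi - \alpha\Sigma_1} - \mu\,e^{-\gamma\Sigma_0 + \gamma\xi}$ in $\xi$, the first-order condition produces the candidate
\begin{equation*}
\hat\xi = \frac{\gamma\Sigma_0 - \alpha\Sigma_1}{\alpha + \gamma} + c,
\end{equation*}
with a deterministic constant $c$ fixed by forcing the constraint to bind, $\espalt{}{0}{e^{\gamma\hat\xi}} = 1$. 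The crucial simplification is that, with $\beta$ from \eqref{E:opt_m_integ} satisfying $\beta = \alpha\gamma/(\alpha+\gamma)$, one computes $-\gamma\Sigma_0 + \gamma\hat\xi = -\beta(\Sigma_0 + \Sigma_1) + \gamma c$, so the binding constraint gives $e^{\gamma c} = \espalt{}{}{e^{-\gamma\Sigma_0}}/\espalt{}{}{e^{-\beta(\Sigma_0 + \Sigma_1)}}$. To show $\hat\xi$ is genuinely optimal (not merely critical), I would combine concavity of $U$ with convexity of $z\mapsto e^{\gamma z}$: the latter yields $\espalt{}{0}{e^{\gamma\hat\xi}(\xi - \hat\xi)} \le 0$ for every admissible $\xi$, and the first-order condition, which makes $U'(\hat\xi+\Sigma_1)$ proportional to $(d\qprob_0/d\prob)\,e^{\gamma\hat\xi}$, then forces $\espalt{}{}{U'(\hat\xi + \Sigma_1)(\xi - \hat\xi)} \le 0$, whence $\espalt{}{}{U(\xi + \Sigma_1)} \le \espalt{}{}{U(\hat\xi + \Sigma_1)}$.

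Substituting $\hat\xi$ into the objective and again using $\beta = \alpha\gamma/(\alpha+\gamma)$ gives the value $-e^{-\alpha c}\,\espalt{}{}{e^{-\beta(\Sigma_0+\Sigma_1)}}$; inserting $e^{\gamma c}$ and invoking the identity $\alpha/\gamma + 1 = \alpha/\beta$ collapses this to the claimed \eqref{E:vf_uncon}. To identify the optimal strategy, I would use that the optimal terminal wealth is $X_T(\hat\pi) = e^{\gamma\hat\xi}$, and rewrite $e^{\gamma\hat\xi}$ via \eqref{E:rn_deriv} and \eqref{E:opt_M_uncon} as the ratio $\E(\int_0^\cdot \hphi_t'dB_t)_T / \E(\int_0^\cdot H_t(0)'dB_t)_T$; expanding both Dol\'eans--Dade exponentials shows this ratio equals $\E(\int_0^\cdot (\hphi_t - H_t(0))'(dB_t - H_t(0)\,dt))_T$, which by \eqref{E:cons_wealth} (with $\lambda = -H(0)$) is exactly $X_T(\pi)$ for $\pi = \hphi - H(0)$. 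Admissibility $\hat\pi \in \A_C$ then follows from Assumption \ref{A:H2}: since $\K^o_t = \K_t - H_t(0)$ by \eqref{E:constraint_set} and $\hphi \in \K$, we get $\hat\pi \in \K^o$ a.s., while $\int_0^T |\hat\pi_t|^2\,dt < \infty$ a.s.\ holds automatically because $\hphi$ and $H(0)$ are the integrands representing strictly positive continuous martingales. Being $\A_C$-admissible and attaining $\tilde{u}$, the strategy $\hat\pi$ is optimal for both problems in \eqref{E:cons_no_cons_opt_power} (as $\tilde{u}_C \le \tilde{u}$), which also equates the two value functions.

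I expect the main obstacle to be bookkeeping rather than conceptual: correctly pinning down the multiplier through the constant $c$, executing the ratio-of-exponentials identity so that the abstract optimizer $\hat\xi$ is realized by the explicit strategy $\hphi - H(0)$, and carrying out the algebraic collapse of the value to \eqref{E:vf_uncon} via the relations $\beta = \alpha\gamma/(\alpha+\gamma)$ and $\alpha/\gamma + 1 = \alpha/\beta$.
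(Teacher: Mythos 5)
Your proof is correct, and it reaches the result by a genuinely different (though parallel) route from the paper's. The paper first passes through Proposition \ref{P:power}, recasting the exponential problem as a power-utility problem under the measure $\tprob$ of \eqref{E:tprob_def}, and then invokes standard complete-market duality for power utility: it computes $\wt{Z}_T^q$ explicitly, reads off the value from $\frac{1}{p}\tespalt{}{}{\wt{Z}_T^q}^{1-p}$, and identifies the optimizer from the first-order condition $X_T(\hat{\pi})\wt{Z}_T = \wt{Z}_T^q/\tespalt{}{}{\wt{Z}_T^q}$. You instead stay with the exponential objective in the static budget-constrained form \eqref{E:No_cons_opt_static}, solve it by a pointwise Lagrangian, and give a self-contained verification of optimality via convexity of $z\mapsto e^{\gamma z}$ and concavity of $U$, thereby bypassing Proposition \ref{P:power} and the external duality citations. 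The computations coincide underneath: your $e^{\gamma\hat{\xi}}$ is exactly the paper's $\wt{Z}_T^{q-1}/\tespalt{}{}{\wt{Z}_T^q}$, and the final step -- expanding the ratio of stochastic exponentials from \eqref{E:rn_deriv} and \eqref{E:opt_M_uncon} to realize the optimal wealth as $X_T(\hphi-H(0))$, with admissibility from Assumption \ref{A:H2} -- is the same in both arguments. Your route buys elementarity and self-containedness; the paper's route buys the explicit CARA-to-CRRA correspondence, which is one of the structural points it wants on display. One small item to tidy in your verification step: you should note that $\espalt{}{0}{\bigl(e^{\gamma\hat{\xi}}(\xi-\hat{\xi})\bigr)^{+}}\leq \frac{1}{\gamma}\espalt{}{0}{e^{\gamma\xi}}\leq \frac{1}{\gamma}$ (by the same convexity inequality), so that the gradient expectation $\espalt{}{0}{e^{\gamma\hat{\xi}}(\xi-\hat{\xi})}$ is well defined in $[-\infty,0]$ before you conclude it is nonpositive.
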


\subsection{Indicative examples}\label{S:Examples}

There are concrete examples where Assumption \ref{A:H2} holds. One easy example is when $\K^o=\reals^d$, $\Leb_{[0,T]}\times\prob$ almost surely: as in the Bachelier model of Example \ref{Ex:cons_bach}. We now provide additional examples.

\begin{exa}\label{Ex:MarkovModel} \textit{Markov models.}  Assume $\Sigma_0=\Sigma_{0}(Y_{T})$ and $\Psi = \Psi(Y_{T})$ where $Y$ is a diffusion with dynamics $dY_{t}=b(t,Y_{t})dt+\sigma(t,Y_{t})dB_{t}$ and $Y_{0}=y_{0}$. Here $Y$ is understood as an economic factor. Letting $F(y;q)\dfn -\gamma\Sigma_0(y) - \gamma q'\Psi(y)$, we notice, using the Markov property, that $N_t(q)=v(t,Y_{t};q)$ where $v(t,y;q)\dfn\mathbb{E}\left[e^{F(Y_{T};q)}|Y_t=y\right]$. For $q$ fixed, $v(\cdot;q)$ solves the Cauchy problem
\begin{align*}
v_{t}+\mathcal{L}v&=0, \text{ on }[0,T)\times\mathbb{R}^{d};\qquad v(T,\cdot) =e^{F(\cdot;q)}, \text{ on }\mathbb{R}^{d},
\end{align*}
where $\mathcal{L}$ is the infinitesimal generator of $Y$. Assume $b,\sigma, \Sigma_0,\Psi$ are regular enough so there exists a $u\in C^{1,2,0}$ that solves the above Cauchy problem (see for example \cite{MR0181836} for general regularity results). Using It\^{o}'s formula
\begin{align*}
dN_{t}(q)&=N_{t}(q)\nabla_{y}\log v(t,Y_{t};q) \sigma(t,Y_{t})dB_{t}.\nonumber
\end{align*}
Thus, $H_{t}(q)=\sigma(t,Y_{t})'\nabla_{y}\log v(t,Y_{t};q)$, and the validity of Assumption \ref{A:H2} depends on smoothness and monotonicity properties of the map  $q\mapsto \nabla_{y}\log v(t,Y_{t};q)$. As a matter of fact, Theorem 3 of \cite{MR3778361} shows when $d=k=1$;  $b(t,y)=b(t)$; $\sigma(t,y)=\sigma(t)$; $\Sigma_{0}$ and $\Psi$ are of linear growth; and $\Psi$ strictly monotone in $\mathbb{R}$, then $\K_t=\K^o_t=\mathbb{R}$. Interestingly, \cite{MR3778361} also  provides an example where $\Psi$ is the payoff of a European call (which is not strictly monotone) and $\K_{t}\neq\mathbb{R}$. In fact, in this Markovian setting, identifying $\K_t$ is equivalent to identifying, for $t\leq T, y\in\reals^d$ fixed, the range $\cbra{\sigma(t,y)'\nabla_{y}\log v(t,y;q)\such q\in\reals^k}$.  To the best of our knowledge, general results of this type are absent in the literature.
\end{exa}

A common assumption in the literature is that $\Psi$ is linear in $\Sigma_0,\Sigma_1$ (see \cite{AdmPfl88, GloMil85, Kyl89}, as well as the more recent \cite{KylObiWa18, RosWer15}). In other words, agents securitize their risky positions, and through trading achieve a mutually beneficial risk reduction.  The following examples are related to this setting, and verify Assumption \ref{A:H2}. Furthermore, within this setting, dynamic trading is not necessary since the optimal strategy is static.

\begin{exa}\label{Ex:IndependentCase} \textit{Endowments as portfolios of $\Psi$ and an independent component.} Let $d=k+n$ and write $\ol{B}=(B^1,...,B^k)$ and $\ul{B} = (B^{k+1},...,B^{k+n})$. For $k_0,k_1\in\reals^k$, assume $\Sigma_{0}=k_{0}'\Psi+Y_{0}$, $\Sigma_{1}=k_{1}'\Psi+Y_{1}$, where $\Psi$ is $\F^{\ol{B}}_T$ measurable and $Y_0,Y_1$ are $\F^{\ul{B}}_T$ measurable. $Y_0$ and $Y_1$ can be interpreted as the idiosyncratic components of the respective endowments; while the positions in $\Psi$ are the part invested in the tradeable asset (for the market maker, the latter can be thought as the aggregate inventory).  As $(Y_{0}, Y_{1})$ and $\Psi$ are independent given $\F_t$ for all $t\leq T$, it follows that for $q\in\reals^k$ and $t\leq T$
\begin{equation*}
N_t(q) = \condespalt{}{e^{-\gamma(k_{0}+q)'\Psi}}{\F_t}\condespalt{}{e^{-\gamma Y_0 }}{\F_t}.
\end{equation*}
In \eqref{E:Nq}, write $H(q)=\left(\ol{H}(q), \ul{H}(q)\right)$. We see
\begin{equation*}
\E\left(\int_0^\cdot \ol{H}_u(q)'d\ol{B}_u\right)_T=\frac{e^{-\gamma(k_{0}+q)'\Psi}}{\espalt{}{}{e^{-\gamma (k_{0}+q)'\Psi}}};\qquad \E\left(\int_0^\cdot \ul{H}_u(q)'d\ul{B}_u\right)_T=\frac{e^{-\gamma Y_{0}}}{\espalt{}{}{e^{-\gamma Y_{0}}}}.
\end{equation*}
As $\ul{H}$ does not depend upon $q$, we find $\mathcal{K}^o_t= \left( \cbra{\ol{H}_t(q)}_{q\in\reals},0\right)$. Next, recall $\hphi=\left(\ol{\hphi},\ul{\hphi}\right)$ from \eqref{E:opt_m_integ} and \eqref{E:opt_M_uncon}.  Due to the independence
\begin{equation*}
\mathcal{E}\left(\int_0^\cdot \hphi_t'dB_t\right)_T =
 \frac{e^{-\beta\left(k_{0} + k_{1}\right)'\Psi}}{\espalt{}{}{e^{-\beta\left(k_{0} + k_{1}\right)'\Psi}}}\frac{e^{-\beta\left(Y_{0} + Y_{1}\right)}}{\espalt{}{}{e^{-\beta\left(Y_{0} + Y_{1}\right)}}} = \mathcal{E}\left(\int_0^\cdot \ol{\hphi}_t'd\ol{B}_t\right)_T\mathcal{E}\left(\int_0^\cdot \ul{\hphi}_t'd\ul{B}_t\right)_T.
\end{equation*}
Since
\begin{equation*}
-\beta(\Sigma_0 + \Sigma_1) = -\gamma \Sigma_0 -\gamma\frac{\alpha \Sigma_1 - \gamma \Sigma_0}{\alpha+\gamma},
\end{equation*}
we see that $\ol{\hphi} = \ol{H}((\alpha k_1 - \gamma k_0)/(\alpha+\gamma))$, but we can only have $\ul{\hphi} = \ul{H}(0)$ (i.e.~Assumption \ref{A:H2}) when $\alpha Y_1 = \gamma Y_0$. Absent this condition, it fails. However, the constraint $\pi \in \K^o$ means we cannot invest in $\ul{S}$, while the optimal demand in $\ol{S}$ is $\ol{\hphi}-\ol{H}(0)$. Thus, for the investor, the static position $\hat{Q}_t = (\alpha k_1-\gamma k_0)/(\alpha+\gamma)$, $t\leq T$ is optimal, and there is no need for dynamic trading. Indeed, the optimal order puts the market maker and investor at a Pareto-optimal situation (see \cite{MR2211128}), and there is no other mutually beneficial transaction to be made.
\end{exa}

\begin{exa}\label{Ex:BachelierModel} \textit{Bachelier model revisited.} Our last example shows dynamic optimal trading strategies are possible. We extend the Bachelier model of  Example \ref{Ex:cons_bach} by setting $\Sigma_1 = \int_0^T g_t'dB_t$ where $g\in L^2([0,T];\reals^d)$. Assumption \ref{A:H2} holds as $\K_t = \reals^d$, and calculation shows $\hphi$ and $\hat{\pi}$ from \eqref{E:opt_M_uncon}, Proposition \ref{P:powerUnconstrained} respectively are
\begin{equation*}
\hphi = -\beta\left(f + g\right);\quad \hat{\pi} = \hphi - H(0) = \frac{\gamma}{\alpha+\gamma}\left(\gamma f - \alpha g\right).
\end{equation*}
Therefore, $\hat{Q} =  (1/(\alpha+\gamma))\psi^{-1}\left(\alpha g - \gamma f\right)$. Note that the optimal demand is deterministic, but not necessarily static as it was in Example \ref{Ex:IndependentCase}. However, when $f = \psi k_0$ and $g = \psi k_1$ (endowments are portfolios of $\Psi$) we recover again static optimal positions.

\end{exa}

\section{Contingent Claim Analysis}\label{S:Pricing}

We arrive at our central topic: obtaining pricing rules and demand schedules for contingent claims. Our focus is on the budget constraint aspect of price impact, and hence we assume $\K^0_t =\reals^d$. We first ask what it means for a price to be arbitrage-free, since the budget constraint in \eqref{E:No_cons_opt_static} is clearly not linear in the payoff $\xi$. This implies non-scalability of arbitrages and hence the topic of the arbitrage-free contingent claim pricing must be revisited.

We next address the problem of optimal demand by formulating the demand schedule, which gives how many units of the claim the investor wants to hold at each price (arbitrage-free or not). As we will see, it is exactly because of the price-impact that even traded prices which create arbitrage opportunities  do not yield infinite demands (i.e.~arbitrage is limited).

We then endogenize the traded price as an equilibrium one between two investors who bilaterally trade the claim, but hedge in segmented markets. Interestingly, there are cases where the equilibrium traded price allows for arbitrage.

The claim has terminal payoff $h\in\reals$. In contrast to the securities with terminal payoff $\Psi$, $h$ is not traded through the market maker: the investor trades $h$ with another investor, and then hedges her position by dynamically trading with the market maker. For example, $h$ can be seen as the payoff of an illiquid derivative, which is not traded through the market
makers but rather it is traded among two investors or financial institutions (illiquity on some derivative markets has also been supported by empirical studies, e.g.~\cite{BreEldHau01, ChrGoyJacKar18, DeuGupSub11}). To isolate the budget constraint effect, and satisfy all integrability requirements, we assume
\begin{ass}\label{A:H4}
$\K^o_t = \reals^d$ almost surely for $t\leq T$. Furthermore,
$\espalt{}{}{e^{-\gamma\Sigma_0 + p(|\Psi|+|h|)}} <\infty$ and $\espalt{}{}{e^{-\alpha\Sigma_1 + p|h|}} <\infty$ for all $p\geq 0$.
\end{ass}


\subsection{Arbitrage-free prices}\label{SS:arb_free_px}

Fix a position size $\qtau\in\reals$. Assumption \ref{A:H4} implies the investor can hedge the claim $\qtau h$. Indeed, with $\qprob_0$ from \eqref{E:rn_deriv}, since $\espalt{}{0}{e^{\gamma\qtau h}} < \infty$, there is a demand process $Q\in\A_{PI}$ and a (per unit) initial capital $\ol{h}(\qtau)$ such that $\qtau \ol{h}(\qtau)+ V_T(Q) = \qtau h$ a.s.. This follows by predictable representation, which asserts the existence of a strategy $\pi\in\mcp(\filt)$ so that
\begin{equation}\label{E:short_rep}
\frac{e^{\gamma\qtau h}}{\espalt{}{0}{e^{\gamma\qtau h}}} = \mathcal{E}\left(\int_0^\cdot \pi_t'(dB_t-H_t(0)dt)\right)_T = e^{\gamma V_T(Q)},
\end{equation}
where the last equality follows solving $\pi_t=H_t(Q_t)-H_t(0)$ for $Q_t$. The required initial capital is
\begin{equation}\label{E:short_rep_cap}
\ol{h}(\qtau)\dfn \frac{1}{\gamma\qtau}\log\left(\espalt{}{0}{e^{\gamma\qtau h}}\right),
\end{equation}
which is exactly the market maker's indifference (per unit) value for selling $\qtau$ units of $h$. As $h$ is not traded with the market maker, we want to identify prices for $h$ which preclude arbitrage opportunities from arising when the investor trades $\Psi$ with the market maker. Despite the completeness implied by $\K^o_t = \reals^d$, due to price impact, there need not be only one ``arbitrage-free'' price for $\qtau h$. In fact, we presently develop three notions of an arbitrage-free price, which coincide in the frictionless case, but give strikingly different answers considering price impact.

To develop the first two notions, we see from \eqref{E:short_rep_cap} that in contrast to the frictionless case, the replicating capital for $-\qtau h$ is not $-\qtau \ol{h}(\qtau)$. Rather, it is $-\qtau\ul{h}(\qtau)$ where
\begin{equation}\label{E:long_rep_cap}
\ul{h}(\qtau)\dfn -\frac{1}{\gamma\qtau } \log\left(\espalt{}{0}{e^{-\gamma\qtau h}}\right),
\end{equation}
is the market maker's indifference (per unit) value for buying $\qtau$ units of claim $h$. Jensen's inequality implies $\ul{h}(\qtau)\leq \ol{h}(\qtau)$, which simply verifies the fact that buying value (bid) is less than the corresponding selling value (ask).

We first define a strong notion of an arbitrage-free price for $h$, consistent with that in \cite{MR3512788}:
\begin{defn}\label{D:strong_arb_free_price}
$p$ is an \emph{arbitrage-free for all positions} in $h$, provided for all $Q\in\A_{PI}$ and $\qtau\in\reals$, if $\qtau p+V_T(Q)  - \qtau h\geq 0$ a.s., then $\qtau p+V_T(Q) -\qtau h = 0$ a.s..
\end{defn}
The reasoning behind Definition \ref{D:strong_arb_free_price} is clear: we rule out the investor being able to start with nothing, and for some $\qtau\in\reals$, trade $\qtau$ units of $h$ for a per-unit price of $p$; trade $\Psi$ with the market maker over $[0,T]$; and cover her position with positive probability of gain at $T$.

The above definition is strong as it rules out arbitrages for \textit{all} position sizes $\qtau$ simultaneously. A weaker notion of an arbitrage-free price is:
\begin{defn}\label{D:arb_free_price}
$p$ is an \emph{arbitrage-free at the level} $\qtau >0$ in $h$ provided
\begin{enumerate}
\item [\textit{(a)}] For all $Q\in\A_{PI}$, if $\qtau p+V_T(Q)  - \qtau h\geq 0$ a.s., then $\qtau p+V_T(Q) -\qtau h = 0$ a.s..
\item [\textit{(b)}] For all $Q\in \A_{PI}$, if $-\qtau p+V_T(Q) + \qtau h\geq 0$ a.s., then $-\qtau p+V_T(Q)+ \qtau h = 0$ a.s..
\end{enumerate}
\end{defn}

In other words, an arbitrage-free price $p$ for at the level $\qtau>0$ rules out arbitrages for either buying or selling $\qtau$. Based on these two definitions we obtain the following, proved in Appendix \ref{AS:Proofs}.
\begin{prop}\label{P:arb_free_range}
Let Assumption \ref{A:H4} hold. Then,
\begin{itemize}
\item [\textit{i)}] The range of arbitrage-free prices for $h$ in the sense of Definition \ref{D:strong_arb_free_price} is the singleton $\espalt{}{0}{h}$.
\item [\textit{ii)}] For any fixed $\qtau > 0$, the range of arbitrage-free prices for $h$ in the sense of Definition \ref{D:arb_free_price} is the closed interval  $[\ul{h}(\qtau),\ol{h}(\qtau)]$ from \eqref{E:short_rep_cap} and \eqref{E:long_rep_cap} respectively.
\end{itemize}
\end{prop}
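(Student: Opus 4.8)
The plan is to make everything flow from the budget constraint \eqref{E:wealth_bc}, namely $\espalt{}{0}{e^{\gamma V_T(Q)}}\leq 1$ for every $Q\in\A_{PI}$, together with the replication identities \eqref{E:short_rep}--\eqref{E:short_rep_cap} and their ``buying'' counterpart behind \eqref{E:long_rep_cap}. I would prove \textit{ii)} first and deduce \textit{i)} from it. Throughout I use that $\qprob_0\sim\prob$, so a.s.\ statements under $\qprob_0$ and $\prob$ coincide.

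For \textit{ii)}, first I would rule out prices outside $[\ul{h}(\qtau),\ol{h}(\qtau)]$ by exhibiting explicit arbitrages. If $p>\ol{h}(\qtau)$, take the selling-replication $Q$ of \eqref{E:short_rep} for which $V_T(Q)=\qtau h-\qtau\ol{h}(\qtau)$; then $\qtau p+V_T(Q)-\qtau h=\qtau(p-\ol{h}(\qtau))>0$ a.s., violating \textit{(a)}. Symmetrically, if $p<\ul{h}(\qtau)$, the strategy replicating $-\qtau h$ from capital $-\qtau\ul{h}(\qtau)$ violates \textit{(b)}. For the converse inclusion, suppose $W\dfn \qtau p+V_T(Q)-\qtau h\geq 0$ a.s. Writing $V_T(Q)=W+\qtau(h-p)$ and inserting into \eqref{E:wealth_bc} gives $\espalt{}{0}{e^{\gamma W}e^{\gamma\qtau(h-p)}}\leq 1$; since $W\geq 0$ forces $e^{\gamma W}\geq 1$, this yields $\espalt{}{0}{e^{\gamma\qtau(h-p)}}\leq 1$, i.e.\ $p\geq\ol{h}(\qtau)$. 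Hence if $p\leq\ol{h}(\qtau)$ the hypothesis of \textit{(a)} can hold only when $p=\ol{h}(\qtau)$, and then $\espalt{}{0}{e^{\gamma\qtau(h-p)}}=1$, so the chain of inequalities collapses to $\espalt{}{0}{(e^{\gamma W}-1)e^{\gamma\qtau(h-p)}}=0$; as the integrand is nonnegative and $e^{\gamma\qtau(h-p)}>0$, we get $W=0$ a.s. The identical manipulation applied to $W'\dfn-\qtau p+V_T(Q)+\qtau h$ shows \textit{(b)} holds for every $p\geq\ul{h}(\qtau)$. Combining, each $p\in[\ul{h}(\qtau),\ol{h}(\qtau)]$ is arbitrage-free at level $\qtau$, and no other $p$ is.

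For \textit{i)}, I would first check that $p=\espalt{}{0}{h}$ is arbitrage-free for \emph{all} positions. Fix any $\qtau\in\reals$ and $Q\in\A_{PI}$ with $W\dfn\qtau p+V_T(Q)-\qtau h\geq 0$ a.s. The same substitution into \eqref{E:wealth_bc} gives $\espalt{}{0}{e^{\gamma W}e^{\gamma\qtau(h-p)}}\leq 1$; but now Jensen's inequality and $\espalt{}{0}{h-p}=0$ give $\espalt{}{0}{e^{\gamma\qtau(h-p)}}\geq 1$, so again all inequalities are equalities and $W=0$ a.s. For uniqueness, note that Definition \ref{D:strong_arb_free_price} applied with $+\qtau$ and with $-\qtau$ yields exactly conditions \textit{(a)} and \textit{(b)} of Definition \ref{D:arb_free_price} at every level $\qtau>0$; hence any price arbitrage-free for all positions lies in $\bigcap_{\qtau>0}[\ul{h}(\qtau),\ol{h}(\qtau)]$. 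It remains to show this intersection is $\{\espalt{}{0}{h}\}$, which I would get from the monotonicity of the entropic functional: $\ol{h}(\qtau)=\frac{1}{\gamma\qtau}\log\espalt{}{0}{e^{\gamma\qtau h}}$ is nondecreasing in $\qtau$ with $\ol{h}(0^+)=\espalt{}{0}{h}$, and symmetrically $\ul{h}(\qtau)$ is nonincreasing with $\ul{h}(0^+)=\espalt{}{0}{h}$, so the nested intervals shrink to the single point $\espalt{}{0}{h}$.

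The routine parts are the substitutions into the budget constraint and the two explicit arbitrages. The one genuinely analytic input is the monotonicity and $\qtau\downarrow 0$ limit of the normalized cumulant functional $s\mapsto s^{-1}\log\espalt{}{0}{e^{sh}}$ (nondecreasing with limit $\espalt{}{0}{h}$ at $0$, via convexity of the cumulant generating function), which is what forces the ``for all positions'' range to degenerate to a singleton; the other point requiring care is passing from ``$p$ in range'' to the equality $W=0$ a.s., where one must invoke strict positivity of $e^{\gamma\qtau(h-p)}$ together with $\qprob_0\sim\prob$. I expect the monotonicity/limit step to be the main obstacle, since the rest is direct manipulation of \eqref{E:wealth_bc}.
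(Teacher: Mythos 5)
Your proposal is correct and follows essentially the same route as the paper: both arguments hinge on the budget constraint $\espalt{}{0}{e^{\gamma V_T(Q)}}\leq 1$, the explicit replicating strategies from \eqref{E:short_rep} to produce arbitrages outside $[\ul{h}(\qtau),\ol{h}(\qtau)]$, and the monotonicity plus $\qtau\downarrow 0$ limit of $\ul{h}(\qtau),\ol{h}(\qtau)$ (via H\"older/convexity and dominated convergence) to collapse the strong arbitrage-free range to $\{\espalt{}{0}{h}\}$. The only cosmetic difference is that you verify $\espalt{}{0}{h}$ is strongly arbitrage-free directly by Jensen's inequality, whereas the paper deduces it from membership in every interval $[\ul{h}(\qtau),\ol{h}(\qtau)]$; both are sound.
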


As a consequence of Proposition \ref{P:arb_free_range}, there is an arbitrage opportunity whenever the large investor finds an ask (resp.~bid) price lower (resp.~higher) than $\espalt{}{0}{h}$. Indeed, if the asked (resp.~bid) price $p$ is lower (resp.~higher) than $\espalt{}{0}{h}$, there exists a positive number of units $\qtau$ of $h$, for which $p<\ul{h}(\qtau)$ (resp.~$p>\ol{h}(\qtau)$). Then, an arbitrage opportunity arises buying (resp.~selling) $\qtau$ units of $h$, and then hedging by dynamically trading $\Psi$ with the market makers. On the other hand, if the only price for $h$ the investor can find is $\espalt{}{0}{h}$, there is no arbitrage opportunity for any position $\qtau\in\reals$.

Since $\ul{h}(\qtau)$ is decreasing in $\qtau>0$ and $\ol{h}(\qtau)$ is increasing in $\qtau>0$, we immediately see:
\begin{cor}\label{C:arbitrage}
Let Assumption \ref{A:H4} hold. If $p$ is an arbitrage-free price at the level $\qtau>0$, then $p$ is an arbitrage-free price at the level $\qtau'$ for all $ \qtau'\geq \qtau$.
\end{cor}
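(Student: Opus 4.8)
The plan is to reduce the statement entirely to Proposition \ref{P:arb_free_range}(ii), which identifies the arbitrage-free prices at level $\qtau$ with the closed interval $[\ul{h}(\qtau),\ol{h}(\qtau)]$. Granting this characterization, the Corollary is equivalent to the two monotonicity claims asserted just before it: that $\qtau\mapsto\ol{h}(\qtau)$ is nondecreasing and $\qtau\mapsto\ul{h}(\qtau)$ is nonincreasing on $(0,\infty)$. Indeed, once these hold, for $\qtau'\geq\qtau$ and any $p$ arbitrage-free at level $\qtau$ (so $\ul{h}(\qtau)\leq p\leq\ol{h}(\qtau)$), I obtain the chain $\ul{h}(\qtau')\leq\ul{h}(\qtau)\leq p\leq\ol{h}(\qtau)\leq\ol{h}(\qtau')$, hence $p\in[\ul{h}(\qtau'),\ol{h}(\qtau')]$, and Proposition \ref{P:arb_free_range}(ii) gives that $p$ is arbitrage-free at level $\qtau'$.

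The substance is therefore the monotonicity, which I would extract from convexity of the cumulant generating function of $h$ under $\qprob_0$. Setting $\Lambda(\theta)\dfn\log\espalt{}{0}{e^{\theta h}}$, Assumption \ref{A:H4} ensures $\Lambda(\theta)<\infty$ for every $\theta\in\reals$, H\"older's inequality gives that $\Lambda$ is convex, and $\Lambda(0)=0$. Rewriting \eqref{E:short_rep_cap} and \eqref{E:long_rep_cap} in terms of $\Lambda$,
\begin{equation*}
\ol{h}(\qtau) = \frac{\Lambda(\gamma\qtau)}{\gamma\qtau}, \qquad \ul{h}(\qtau) = \frac{\Lambda(-\gamma\qtau)}{-\gamma\qtau},
\end{equation*}
so both are values of the single ``slope-to-the-origin'' function $\phi(\theta)\dfn\Lambda(\theta)/\theta$, evaluated at $\theta=\gamma\qtau>0$ and $\theta=-\gamma\qtau<0$ respectively.

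The key technical step is the elementary fact that, for convex $\Lambda$ with $\Lambda(0)=0$, the slope function $\phi(\theta)=(\Lambda(\theta)-\Lambda(0))/(\theta-0)$ is nondecreasing on $\reals\setminus\{0\}$. I would prove this by a direct application of the three-chord inequality for convex functions: comparing the chords to the origin from the points $0<\gamma\qtau<\gamma\qtau'$ yields $\ol{h}(\qtau)=\phi(\gamma\qtau)\leq\phi(\gamma\qtau')=\ol{h}(\qtau')$, while comparing the chords to the origin from $-\gamma\qtau'<-\gamma\qtau<0$ yields $\ul{h}(\qtau')=\phi(-\gamma\qtau')\leq\phi(-\gamma\qtau)=\ul{h}(\qtau)$. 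Since $\theta=\gamma\qtau$ increases with $\qtau$ and $\theta=-\gamma\qtau$ decreases with $\qtau$, monotonicity of $\phi$ delivers exactly that $\ol{h}$ is nondecreasing and $\ul{h}$ is nonincreasing, completing the argument via the reduction in the first paragraph.

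I do not anticipate a genuine obstacle here: the only ingredients are convexity of $\Lambda$ (standard, via H\"older under the integrability in Assumption \ref{A:H4}) and the slope lemma (a one-line consequence of the three-chord inequality). The one point requiring mild care is bookkeeping of signs, since $\ul{h}$ corresponds to the negative branch $\theta<0$, where the desired \emph{decreasing} dependence on $\qtau$ arises from the \emph{nondecreasing} behavior of $\phi$ after the sign flip $\theta=-\gamma\qtau$.
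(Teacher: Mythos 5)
Your proposal is correct and follows essentially the same route as the paper: the Corollary is stated there as an immediate consequence of Proposition \ref{P:arb_free_range}\textit{(ii)} together with the monotonicity of $\ul{h}(\cdot)$ and $\ol{h}(\cdot)$, which the paper justifies by H\"older's inequality in the proof of Proposition \ref{P:arb_free_range}\textit{(i)}. Your convexity-of-$\Lambda$ and three-chord argument is just that H\"older inequality (with exponent $\qtau'/\qtau$) written in cumulant-generating-function language, so there is nothing substantively different to compare.
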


According to the LaPlace Principle
\begin{equation}\label{E:small_big_h}
\ul{h}(\qtau)\downarrow \ul{h}\dfn \essinf{}{h};\qquad \ol{h}(\qtau)\uparrow\ol{h}\dfn\esssup{}{h}\quad \textrm{as}\quad \qtau\uparrow\infty.
\end{equation}
Therefore, any price $(\ul{h},\ol{h})\ni p\neq \espalt{}{0}{h}$, induces an arbitrage, but the arbitrage opportunity is limited, in the sense there is a maximum number of units of the claim at which the arbitrage can be exploited. In other words, due to price impact, the arbitrage vanishes if the investor takes too large a position. Intuitively, large positions on the claim require large hedging positions in $\Psi$. This changes the market maker's inventory, and the resultant pricing rule is to the detriment of the investor. This sharply departs from arbitrages in markets without price impact, which can arbitrarily scaled.

Definitions \ref{D:strong_arb_free_price} and \ref{D:arb_free_price} do not account for investor preferences. This is an important omission, because the whole discussion on price impact is dedicated to investors large enough to create price impact. We now define a third notion, which states that $p$ is an arbitrage-free price if, given the possibility to purchase arbitrary quantities of $h$ for $p$, the investor's optimal demand is finite. To state it, for a given initial capital $x$, utility function $U$, and endowment $\Sigma_1$, recall the value function $u(x;\Sigma_1)$ from \eqref{E:PI_opt}.
\begin{defn}\label{D:util_arb_free_price}
$p$ is a \textit{utility-demand based arbitrage-free price} for $h$ if for every $\cbra{\qtau_n}_{n\in\mathbb{N}}$ such that
\begin{equation*}
\lim_{n\uparrow\infty} u(x-p\qtau_n, \Sigma_1 + \qtau_n h) = \sup_{\qtau\in\reals} u(-p\qtau;\Sigma_1 + \qtau h),
\end{equation*}
we have $\sup_n |\qtau_n| < \infty$.
\end{defn}

Recall $\beta$ from \eqref{E:opt_m_integ} and $\ol{h},\ul{h}$ from \eqref{E:small_big_h}. For exponential preferences, the following result (with proof in Appendix \ref{AS:Proofs}) shows the range of demand-based arbitrage-free prices is maximal.  This is in stark contrast to the frictionless case (see \cite{MR2212897}).

\begin{prop}\label{P:util_arb_free_price}
Let Assumption \ref{A:H4} hold, and assume the investor has exponential preferences with risk aversion $\alpha$. The range of utility-demand based arbitrage-free prices is $(\ul{h},\ol{h})$.  For $p\in(\ul{h},\ol{h})$, the optimal demand is the unique solution $\hat{\qtau}=\hat{\qtau}(p)$ to
\begin{equation}\label{E:opt_demand}
p = \frac{\espalt{}{}{h e^{-\beta\left(\Sigma_0 + \Sigma_1 + \hat{\qtau} h\right)}}}{\espalt{}{}{e^{-\beta\left(\Sigma_0 + \Sigma_1 + \hat{\qtau} h\right)}}}.
\end{equation}
The map $p\to \hat{\qtau}(p)$ gives the demand schedule.
\end{prop}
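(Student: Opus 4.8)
The plan is to reduce the two-variable maximization defining the demand to the minimization of a single strictly convex function of $\qtau$, and then read off both the arbitrage-free range and the first-order condition \eqref{E:opt_demand}. First I would exploit the CARA structure together with Proposition \ref{P:powerUnconstrained}. Under Assumption \ref{A:H4} we have $\K^o_t = \reals^d$, hence $\K_t = \reals^d$ and Assumption \ref{A:H2} holds trivially, while the all-order exponential moment bounds in Assumption \ref{A:H4} supply the integrability needed for the shifted endowment $\Sigma_1 + \qtau h$. Thus Proposition \ref{P:powerUnconstrained} applies with $\Sigma_1$ replaced by $\Sigma_1 + \qtau h$. Since $U(x) = -e^{-\alpha x}$ gives the CARA scaling $u(x;\cdot) = e^{-\alpha x}u(0;\cdot)$ (so the initial capital may be set to $0$ and the shift by $-p\qtau$ contributes only a factor $e^{\alpha p\qtau}$), the explicit value \eqref{E:vf_uncon} yields
\[
u(-p\qtau;\Sigma_1+\qtau h) = -\,\espalt{}{}{e^{-\gamma\Sigma_0}}^{-\alpha/\gamma}\, e^{\alpha p\qtau}\,\espalt{}{}{e^{-\beta(\Sigma_0+\Sigma_1+\qtau h)}}^{\alpha/\beta}.
\]
The prefactor is a positive constant independent of $\qtau$, so maximizing $u$ over $\qtau$ is equivalent to minimizing
\[
g(\qtau)\dfn p\qtau + \frac{1}{\beta}\log\espalt{}{}{e^{-\beta(\Sigma_0+\Sigma_1+\qtau h)}}.
\]

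Next I would show $g$ is smooth and strictly convex. Differentiating under the expectation (licensed by the all-order exponential integrability of Assumption \ref{A:H4}) gives $g'(\qtau) = p - \mathbb{E}_{\qprob_\qtau}[h]$ and $g''(\qtau) = \beta^2\,\mathrm{Var}_{\qprob_\qtau}[h]$, where $\qprob_\qtau$ is the probability measure with density proportional to $e^{-\beta(\Sigma_0+\Sigma_1+\qtau h)}$. Hence $g'' > 0$ provided $h$ is non-degenerate (if $h$ is a.s.\ constant then $\ul h = \ol h$, the interval $(\ul h,\ol h)$ is empty, and the assertion is vacuous), so $g$ is strictly convex and $\qtau\mapsto\mathbb{E}_{\qprob_\qtau}[h]$ is strictly decreasing.

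The crux is the boundary behaviour as $\qtau\to\pm\infty$. I would show $\mathbb{E}_{\qprob_\qtau}[h]\uparrow\ol h$ as $\qtau\downarrow-\infty$ and $\mathbb{E}_{\qprob_\qtau}[h]\downarrow\ul h$ as $\qtau\uparrow+\infty$: this is the same Laplace-principle concentration already invoked in \eqref{E:small_big_h}, since under the exponential tilt $e^{-\beta\qtau h}$ the mass of $\qprob_\qtau$ concentrates on $\{h\approx\ul h\}$ (resp.\ $\{h\approx\ol h\}$), and the convergence follows by dominated convergence using the uniform exponential moment control of Assumption \ref{A:H4}. Consequently $g'$ increases from $p-\ol h$ to $p-\ul h$, so a finite critical point exists if and only if $p\in(\ul h,\ol h)$; by strict convexity it is then the unique global minimizer $\hat\qtau$, characterized by $g'(\hat\qtau)=0$, which is precisely \eqref{E:opt_demand}, and coercivity forces every minimizing sequence to converge to $\hat\qtau$ and hence to be bounded. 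For $p\ge\ol h$ (resp.\ $p\le\ul h$) one has $g'>0$ (resp.\ $g'<0$) everywhere, so $g$ is strictly monotone, the infimum is approached only as $\qtau\to-\infty$ (resp.\ $+\infty$), and minimizing sequences are unbounded, so $p$ is not demand-based arbitrage-free. This pins down the range as $(\ul h,\ol h)$, while the strictly decreasing bijection $\qtau\mapsto\mathbb{E}_{\qprob_\qtau}[h]$ from $\reals$ onto $(\ul h,\ol h)$ inverts to give the demand schedule $p\mapsto\hat\qtau(p)$. I expect the main obstacle to be the rigorous justification of the two limits and of the differentiation under the expectation; both are controlled by the all-order exponential moment bounds in Assumption \ref{A:H4}.
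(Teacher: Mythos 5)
Your proposal is correct and follows essentially the same route as the paper: reduce via Proposition \ref{P:powerUnconstrained} (with endowment $\Sigma_1+\qtau h$) to minimizing the smooth, strictly convex function $\qtau\mapsto p\qtau+\beta^{-1}\log\espalt{}{}{e^{-\beta(\Sigma_0+\Sigma_1+\qtau h)}}$ and read off \eqref{E:opt_demand} from the first-order condition, the paper simply citing \cite{MR3668156} for the boundary limits of the tilted expectation that you sketch directly via Laplace-type concentration. The only (harmless) slip is the constant in the second derivative, which is $\beta\,\mathrm{Var}_{\qprob_{\qtau}}\left[h\right]$ rather than $\beta^2\,\mathrm{Var}_{\qprob_{\qtau}}\left[h\right]$.
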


Propositions \ref{P:arb_free_range} and \ref{P:util_arb_free_price} imply there are arbitrage prices in the sense of Definitions \ref{D:strong_arb_free_price}, \ref{D:arb_free_price}, at which the investor's optimal demand is finite. Indeed, this holds for all $(\ul{h},\ol{h})\ni p\neq \espalt{}{0}{h}$. However, the arbitrage gain is limited, and the investor may want to trade the claim for hedging purposes. If, for instance, the claim is negatively correlated with her endowment, she has motive to take a long position (even if $p>\espalt{}{0}{h}$, meaning that the arbitrage is provided by a short position). When the benefits from hedging exceed the (limited) arbitrage, the optimal demand does not exploit it (i.e.~the investor optimally ignores the arbitrage opportunity). The following example reinforces this point.

\begin{exa}\label{Ex:BachelierModel2}
We revisit the Bachelier model of Examples \ref{Ex:cons_bach} and \ref{Ex:BachelierModel}, and assume $h=\int_0^T y_t'dB_t$, where $y\in L^2([0,T];\reals^d)$. Assumption \ref{A:H4} readily holds and hence any position $\qtau$ in $h$ can be hedged. Simple calculations show $\espalt{}{0}{h}=-\gamma\int_0^Ty_t'f_tdt$, along with
\begin{equation*}
\ol h(\qtau)=\espalt{}{0}{h}+\frac{1}{2}\gamma\qtau\int_0^T|y_t|^2 dt;\qquad \ul h(\qtau)=\espalt{}{0}{h}-\frac{1}{2}\gamma\qtau\int_0^T|y_t|^2dt.
\end{equation*}
The (linear) optimal demand schedule $p\to \hat{\qtau}(p)$ from \eqref{E:opt_demand} is
\begin{equation}\label{E:optimal demand Bachelier}
\hat{\qtau}(p)= -\frac{p+\beta \int_0^Ty_t'(f_t+g_t)dt}{\beta \int_0^T|y_t|^2dt}= -\frac{p + \alpha\covar{h}{V_T(\hat{Q}(0))+\Sigma_1}}{\beta\var{h}},
\end{equation}
where $\hat{Q}(0)$ is the optimal order flow with (endowment $\Sigma_1$ but) no claim, and the last equality follows using Example \ref{Ex:BachelierModel} at $\Sigma_1=\int_0^T g_t'dB_t$.  Next, a direct calculation specifying Example \ref{Ex:BachelierModel} to $\Sigma_1 = \int_0^T g_t'dB_t + \hat{\qtau}(p)h$ shows  for the optimal strategy $\hat{Q} = \hat{Q}(\hat{\qtau}(p))$
\begin{equation*}
-p\hat{\qtau}(p) + V_T(\hat{Q})  + \hat{\qtau}(p)h = \textrm{Constant  } + \frac{1}{\alpha+\gamma}\int_0^T\left(\gamma f_t + \gamma \hat{\qtau}(p) y_t - \alpha g_t\right)'dB_t.
\end{equation*}
Thus, the large investor does not utilize an arbitrage strategy (save for the pathological case when $\gamma f_t + \gamma\hat{\qtau}(p)y_t -\alpha g_t \equiv 0$ which can only happen if a) $\gamma f_t -\alpha g_t \propto y_t$ and b) the traded price is $p=-\alpha\int_0^T y_t'g_t dt$.)

Relation \eqref{E:optimal demand Bachelier} implies $\hat{\qtau}(p) > 0$ if and only if $p < -\alpha \covar{h}{V_T(\hat{Q}(0)) + \Sigma_1}$. This is intuitive because if $h$ is negatively correlated with the large investor's (optimal) terminal wealth absent the claim, then she has a hedging-related incentive to purchase $h$, and will take long positions at higher prices.  Furthermore, if $\hat{\qtau}(p)> 0$, then, as shown in the proof of Proposition \ref{P:arb_free_range}, in order to rule out arbitrages of the form $-p\hat{\qtau}(p) + V(Q)_T + ph$ we must have $p\geq \ul{h}(\hat{\qtau}(p))$.  Writing $p = -\alpha \covar{h}{V(\hat{Q}(0))_T + \Sigma_1} - \eps$ for some $\eps > 0$ we find (see again Example \ref{Ex:BachelierModel})
\begin{equation*}
p - \ul{h}(\hat{\qtau}(p)) = \frac{\gamma-\alpha}{2\alpha}\eps + \gamma\covar{h}{V_T(\hat{Q}(0))}.
\end{equation*}
For the sake of clarity, assume $\gamma=\alpha$. Here, arbitrages arise (but are not used by the large investor) when $\covar{h}{V_T(\hat{Q}(0))} < 0$. In this instance, the large investor's hedging demands outweigh the desire to incorporate the arbitrage.

\end{exa}

%

Expectation $\espalt{}{0}{h}$ provides a unique arbitrage-free price in the sense of Definition \ref{D:strong_arb_free_price}, but this definition is far too restrictive, as it yields no connection between arbitrage-free prices and optimal positions. However, the range of prices which lead to finite demands (i.e.~arbitrage-free in the sense of Definition \ref{D:util_arb_free_price}) is maximally wide. One way to obtain unique price, or ``value'', for $h$ which takes into account investor preferences, is through indifference valuation. Define the (per unit, bid) indifference value $p^I$ for $\qtau$ units of $h$ through the balance equation
\begin{align*}
u(x-\qtau p^I;\Sigma_1 + \qtau h)=u(x;\Sigma_1).
\end{align*}
For exponential preferences, $p^I$ is independent of $x$, and hence we write $p^I(\qtau;\Sigma_1)$. Furthermore, from Proposition \ref{P:powerUnconstrained} we obtain the explicit formula
\begin{equation}\label{E:indiff_px_all}
p^I\left(\qtau;\Sigma_1\right) =-\frac{1}{\alpha \qtau}\log\left(\frac{u(0;\Sigma_1 + \qtau h)}{u(0;\Sigma_1)}\right)=
-\frac{1}{\beta\qtau}\log\left(\frac{\espalt{}{}{e^{-\beta(\Sigma_0 + \Sigma_1 + \qtau h)}}}{\espalt{}{}{e^{-\beta(\Sigma_0+\Sigma_1)}}}\right),
\end{equation}
Note that $p^I$  depends on the large investor's endowment and risk aversion. As such, it need not be arbitrage-free in the sense of Definitions \ref{D:strong_arb_free_price}, \ref{D:arb_free_price}, however, it is always arbitrage-free in the sense of Definition \ref{D:util_arb_free_price}.

\subsection{Optimal demands and endogenous large positions}\label{S:LargePositions}

Recall that $1/\beta = 1/\alpha + 1/\gamma$ is the combined risk tolerance of the investor and market maker. Using the results of Section \ref{S:cara_li}, together with the general theory developed in \cite{MR3678485}, we now show that optimal demands are on the order of $1/\beta$, and hence large positions arise \textit{endogenously} as either the investor's or market maker's risk aversion vanishes.  To this latter point, the two situations consistent with $\gamma\rightarrow 0$ are when the number of market makers increases (as this implies growth in the aggregate risk tolerance $1/\gamma$) and as an approximation to market maker risk-neutrality. In fact, the assumption of risk neutral market makers is common in micro-structure literature (c.f.~\cite{LiuWan16} and the references therein). Remarkably, despite the investor's price impact, large positions in $h$ endogenously arise as $\gamma\rightarrow 0$, even if the investor's risk aversion $\alpha$ remains fixed.

To heuristically see why optimal positions are on the order of $1/\beta$ as $\beta\rightarrow 0$, let $p\in (\ul{h},\ol{h})$ and $\hat{\qtau}$ be the optimal demand from \eqref{E:opt_demand}.  Keep $p,\Sigma_0,\Sigma_1$ fixed, and let $\beta\rightarrow 0$. Then, $\hat{\qtau}\beta\rightarrow 0$ implies necessarily that $p=\espalt{}{}{h}$, and $\qtau\beta\rightarrow\infty$ (respectively $-\infty$) implies $p =\ul{h}$ (resp. $\ol{h}$); a contradiction.  Thus, for all traded prices $p$ except $\espalt{}{}{h}$, it must be that $\hat{\qtau}\beta\approx \ell$ for some constant $\ell\neq 0$.

We now make the above precise, considering a sequence $\beta_n\rightarrow 0$. Let Assumption \ref{A:H4} hold, and to be consistent with \cite{MR3678485}, set
\begin{equation*}
r_n\dfn \frac{1}{\beta_n} = \frac{1}{\alpha_n} + \frac{1}{\gamma_n} \rightarrow \infty,
\end{equation*}
as the aggregate risk tolerance. From \eqref{E:indiff_px_all}, we see that along the rate $\qtau = \ell r_n, \ell\in \reals\setminus\{0\}$
\begin{equation*}
p^I\left(\ell r_n;\Sigma_1\right) = -\frac{1}{\ell}\log\left(\frac{\espalt{}{}{e^{-\ell h -\beta_n(\Sigma_0 + \Sigma_1)}}}{\espalt{}{}{e^{-\beta_n(\Sigma_0 + \Sigma_1)}}}\right).
\end{equation*}
Then, as established in \cite[Section 6.2]{MR3678485}, the dominated convergence theorem yields
\begin{align}\label{E:px_converge}
p^{\infty}(\ell)&\dfn\lim_{n\rightarrow\infty} p^{I}\left(\ell r_{n};\Sigma_1\right) =-\frac{1}{\ell}\log\left(\espalt{}{}{e^{-\ell h}}\right).
\end{align}

\begin{rem}\label{R:growing_endowment} The assumption that endowments $\Sigma_0,\Sigma_1$ are independent of $n$ can be relaxed.  For example, consider when there are $n\rightarrow\infty$ market makers with identical risk aversions $\gamma>0$ and endowments $\Sigma_0$.  Here, the representative market maker's risk aversion is $\gamma_n = \gamma/n$ and the aggregate endowment is $\Sigma^n_0 = n\Sigma_0 = (\gamma/\gamma_n)\Sigma_0$.  In other words, even as $\gamma_n\rightarrow 0$, $\gamma_n\Sigma_n = \gamma\Sigma\not\rightarrow 0$, and hence the representative market maker's risk neutrality does not necessarily correspond to zero endowment. Here, the limit in \eqref{E:px_converge} is
\begin{equation*}
\lim_{n\rightarrow\infty} p^{I}\left(\ell r_{n};\Sigma_1\right) =-\frac{1}{\ell}\log\left(\frac{\espalt{}{}{e^{-\ell h-\gamma\Sigma_0}}}{\espalt{}{}{e^{-\gamma\Sigma_0}}}\right).
\end{equation*}
More generally, our analysis can handle endowments grow at the rate $r_n$, and this assumption is reasonable for a large number of market makers.
\end{rem}

Limiting indifference values and endogenous large positions are connected using the theory developed in \cite{MR3678485}. Indeed, for $p\in (\ul{h},\ol{h})$ and $n\in\mathbb{N}$ fixed, Proposition \ref{P:util_arb_free_price} yields a unique optimal demand $\hat{\qtau}_n(p)$.  The convergence in \eqref{E:px_converge}, together with the continuity of $\ell\to p^{\infty}(\ell)$ at $\ell=0$, verifies Assumption 3.3 of \cite{MR3678485} and hence by \cite[Theorems 4.3, 4.4]{MR3678485}, as $r_n\rightarrow \infty$ it holds for all $p\neq p^\infty(0)$ that
\begin{align*}
0&<\liminf_{n\rightarrow\infty}\frac{|\hat{\qtau}_{n}(p)|}{r_{n}}\leq \limsup_{n\rightarrow\infty}\frac{|\hat{\qtau}_{n}(p)|}{r_{n}}<\infty.
\end{align*}
Thus, optimal positions become large at the rate $r_n$. In fact, \cite[Corollary 4.6]{MR3678485} improves upon this by showing that since $\ell\mapsto \ell p^{\infty}(\ell)$ is strictly convex
\begin{equation}\label{E:px_limit_exists}
\lim_{n\rightarrow\infty}\frac{\hat{\qtau}_{n}(p)}{r_{n}}=\ell\in\reals\setminus\{0\}.
\end{equation}
Recall $p^\infty(0)=\espalt{}{}{h}$ (or $p^{\infty}(0) = \espalt{}{}{he^{-\gamma\Sigma_0}}/\espalt{}{}{e^{-\gamma\Sigma_0}}$ if the endowment does not vanish as in Remark \ref{R:growing_endowment}). Therefore, whenever $p\neq p^{\infty}(0)$, the optimal demand increases in magnitude to $\infty$ exactly at the rate $r_n$. The price $p^{\infty}(0)$ is the limit of the arbitrage-free prices $\espalt{0,n}{}{h}$ in the sense of Definition \ref{D:strong_arb_free_price} (as well as Definition \ref{D:arb_free_price} for $\qtau$ fixed as $n\rightarrow\infty$), so $p\neq p^{\infty}(0)$ corresponds to the large investor obtaining a very advantageous price as the market makers approach risk neutrality. Lastly, we remark that \cite[Corollary 4.6]{MR3678485} verifies \eqref{E:px_limit_exists} for any sequence of traded prices $\cbra{p_n}_{n\in\mathbb{N}}\subset (\ul{h},\ol{h})$ provided $p_n\rightarrow p\neq p^{\infty}(0)$. There is no need for $p$ to be fixed across all $n$.

We end this section with an observation.  \cite{MR3678485} makes a general connection between endogenously rising large positions and asymptotic market completeness (c.f.~Section 6.2 therein). Since we are assuming $\K^o_t =\reals^d$, for any level of market maker risk aversion, the market with price impact is complete.  Therefore, the ``asymptotic market completeness'' of \cite{MR3678485} corresponds, not to the vanishing inability to hedge claims in the price impact model, but rather to  ``asymptotic market maker risk neutrality'' in the price impact model.  However, it is possible to associate asymptotic market maker risk neutrality with asymptotic completeness, and this is done through the lens of \textit{basis risk} models (c.f.~\cite{MR2233539}). To see this, we extend the probability space to support a $d$-dimensional Brownian motion $W$ independent of $B$, and denote by $\filt^{W,B}$ the $\prob$-augmentation of $(B,W)$'s natural filtration. Next, we define $\rho \dfn \sqrt{\alpha/(\alpha+\gamma)} \in (0,1)$ and $\bar{\rho}\dfn \sqrt{1-\rho^2}$ and consider a fictitious market with $d$ tradeable assets with price process $S$ evolving as
\begin{equation*}
\frac{dS_t}{S_t} = \rho\left(dB_t - H_t(0)dt\right) + \bar{\rho}dW_t;\qquad t\leq T.
\end{equation*}
Thus, $\rho$ measures the correlation between hedgeable and un-hedgeable shocks, and $\rho\rightarrow 1$ as $\gamma\rightarrow 0$. In this market, trading strategies $\theta$ represent the dollar positions in $S$, and the induced wealth process $X(\theta)$ has dynamics $dX_t(\theta) = \theta_t'\rho(dB_t-H_t(0)dt) + \bar{\rho}\theta_t'dW_t$, for $t\leq T$.  Now, let $\pi\in\A$ or $\pi\in\A_C$ be a strategy from Section \ref{S:cara_li} and consider $\theta = \pi / (\gamma\rho)$. By first conditioning on $\filt = \filt^B$, straight-forward calculations show that under Assumptions \ref{A: integrability} and \ref{A: large_investor_integ}
\[
\tilde{u}(0;\Sigma_1)  = \sup_{\theta\in\A/(\gamma\rho)} \espalt{}{}{-e^{-\alpha\left(X_T(\theta) + \Sigma_1\right)}};\qquad
\tilde{u}_{br,C}(0;\Sigma_{1}) = \sup_{\theta\in\A_{C}/(\gamma\rho)} \espalt{}{}{-e^{-\alpha\left(X_T(\theta) + \Sigma_1\right)}}.
\]
Therefore, we see how market maker risk aversion can directly be linked to a measure of market incompleteness, even though the original price impact model is complete.

\section{Segmented markets and partial equilibrium price and quantities}\label{S:PEPQ}

Our final section endogenizes the traded price through equilibrium arguments.  Recall that for any traded price $p\in(\ul{h},\ol{h})$, if  $p\neq \espalt{}{0}{h}$ then an arbitrage opportunity arises. Also, there is a unique $p^{\infty}(0)$ such that large positions endogenously arise whenever a) traded price differs from $p^{\infty}(0)$ and b) the representative risk aversion $\gamma$ tends to $0$. In this section, we provide justification that such prices $p$ occur in equilibrium.

We consider a \textit{segmented} market. Segmented market models are common in financial literature: for example, see \cite{RahZig09, RahZig14}, where the investors' role is played by ``arbitrageurs'', and the market makers are competitive, price-taking investors. Here, we assume there are two market makers ($A$ and $B$) with different endowments $\Sigma^A_0,\Sigma^B_0$ and tradeable security payoffs $\Psi^A$, $\Psi^B$. By ``segmented'' we mean $A$ and $B$ do not trade with one another. Similarly, there are two investors with endowments $\Sigma^A_1$, $\Sigma^B_1$, and utility functions $U_A$, $U_B$. Each investor trades with her respective ``local'' market maker. The investors also trade $h$ with one another, in the form of a bilateral OTC transaction (similarly to the trades between arbitrageurs in \cite{RahZig09, RahZig14}). The idea is that market makers in a specific product/region/market are not necessarily involved in other products/regions/markets (e.g. those with different transaction costs, regulation, geography, security contracting, etc.) and investors may wish to trade with each other, in addition to their local market makers, to offset risk.

The investors seek both a quantity $\qtau$ and price $p$ at which to trade. To determine these values, we use the notion of Partial Equilibrium Price Quantity (PEPQ) from \cite{MR2667897}, and say a pair $(p^*,\qtau^*)\in\reals^2$ is a PEPQ provided
\begin{equation*}
\qtau^* \in \underset{\qtau\in\reals}{\textrm{argmax}}\cbra{u_A(x_A - p^*\qtau,\Sigma_A + \qtau h)}\bigcap \underset{\qtau\in\reals}{\textrm{argmax}}\cbra{u_B(x_B + p^*\qtau,\Sigma_B - \qtau h)},
\end{equation*}
where $u_A$ and $u_B$ are the investors' value functions. As such, a PEPQ ensures bilateral clearing when acting optimally. For exponential utilities, the analysis simplifies, since  \eqref{E:indiff_px_all} implies $(p^*,\qtau^*)$ is a PEPQ provided
\begin{equation*}
\qtau^* \in \underset{\qtau\in\reals}{\textrm{argmax}}\cbra{\qtau p^{I,A}(\qtau,\Sigma^A_1)-p^*\qtau}\bigcap \underset{\qtau\in\reals}{\textrm{argmax}}\cbra{p^*\qtau - \qtau p^{I,B}(-\qtau,\Sigma^B_1)}.
\end{equation*}
Define the local combined risk tolerances $1/\beta^i \dfn 1/\gamma_i + 1/\alpha_i$ for $i\in\cbra{A,B}$. Using \cite[Theorem 5.8]{MR2667897}, we obtain the following:
\begin{prop}\label{P:PEPQ1}
Let Assumption \ref{A:H4} hold for both markets. If $\beta_A(\Sigma^A_0+\Sigma^A_1)-\beta_B(\Sigma^B_0+\Sigma^B_1)$ is not constant, there is a unique PEPQ  $(p^*,\qtau^*)$. In fact \[\qtau^*= \underset{\qtau\in\reals}{\argmax}\{\qtau p^{I,A}\left(\qtau;\Sigma^A_1\right)+\qtau p^{I,B}\left(-\qtau;\Sigma^B_1\right)\},\]
and
\begin{equation}\label{E:PEP}
p^*=\frac{\espalt{}{}{he^{-\beta^A(\Sigma^A_0+\Sigma^A_1+ \qtau^* h)}}}{\espalt{}{}{e^{-\beta^A(\Sigma^A_0+\Sigma^A_1 + \qtau^* h)}}}=\frac{\espalt{}{}{he^{-\beta^{B}(\Sigma^B_0+\Sigma^B_1 - \qtau^* h)}}}{\espalt{}{}{e^{-\beta^{B}(\Sigma^B_0 + \Sigma^B_1 - \qtau^* h)}}}.
\end{equation}
\end{prop}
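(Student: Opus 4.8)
The plan is to use the exponential structure to turn the bilateral clearing condition into a one–dimensional concave program, and then to apply \cite[Theorem 5.8]{MR2667897}. Set
\[
V_A(\qtau)\dfn \qtau\, p^{I,A}(\qtau;\Sigma^A_1)=-\frac{1}{\beta^A}\log\espalt{A}{}{e^{-\beta^A\qtau h}},\qquad V_B(\qtau)\dfn \qtau\, p^{I,B}(-\qtau;\Sigma^B_1)=\frac{1}{\beta^B}\log\espalt{B}{}{e^{\beta^B\qtau h}},
\]
where, directly from \eqref{E:indiff_px_all}, $\prob^A$ and $\prob^B$ are the probability measures with densities proportional to $e^{-\beta^A(\Sigma^A_0+\Sigma^A_1)}$ and $e^{-\beta^B(\Sigma^B_0+\Sigma^B_1)}$. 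Assumption \ref{A:H4}, applied in both markets, guarantees that $\espalt{A}{}{e^{-\beta^A\qtau h}}$ and $\espalt{B}{}{e^{\beta^B\qtau h}}$ are finite, strictly positive and smooth in $\qtau$, so $V_A$ and $V_B$ are $C^\infty$ on $\reals$.

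First I would reduce the PEPQ to a scalar clearing equation. By the exponential reformulation recorded just before the statement, $(p^*,\qtau^*)$ is a PEPQ exactly when $\qtau^*$ maximizes both $V_A(\qtau)-p^*\qtau$ (investor $A$) and $p^*\qtau-V_B(\qtau)$ (investor $B$). Differentiating the cumulant generating functions twice identifies $V_A''(\qtau)$ with $-\beta^A$ times the variance of $h$ under $\prob^A$ exponentially tilted by $e^{-\beta^A\qtau h}$, and $V_B''(\qtau)$ with $+\beta^B$ times the analogous tilted variance under $\prob^B$; since $h$ is non-degenerate these are strictly signed, so $V_A$ is strictly concave and $V_B$ strictly convex. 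Consequently the first-order conditions $V_A'(\qtau^*)=p^*$ and $V_B'(\qtau^*)=p^*$ are each necessary and sufficient for the respective agent's optimality, and a direct change of measure shows these two identities are precisely the two expressions for $p^*$ in \eqref{E:PEP}. Eliminating $p^*$, clearing reduces to the single equation $V_A'(\qtau^*)=V_B'(\qtau^*)$; equivalently $\qtau^*$ is the maximizer of the strictly concave aggregate gains-from-trade $\qtau\mapsto V_A(\qtau)-V_B(\qtau)$.

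Next I would settle existence and uniqueness of the solution of this clearing equation. Strict concavity of the surplus yields at most one maximizer. For existence I would analyse the marginal valuations at infinity: using the Laplace principle \eqref{E:small_big_h}, $V_A'(\qtau)\downarrow\ul{h}$ and $V_B'(\qtau)\uparrow\ol{h}$ as $\qtau\uparrow\infty$, while $V_A'(\qtau)\uparrow\ol{h}$ and $V_B'(\qtau)\downarrow\ul{h}$ as $\qtau\downarrow-\infty$. Hence the strictly decreasing clearing function $\qtau\mapsto V_A'(\qtau)-V_B'(\qtau)$ tends to $\ul{h}-\ol{h}<0$ at $+\infty$ and to $\ol{h}-\ul{h}>0$ at $-\infty$, so it has a unique zero $\qtau^*\in\reals$; then $p^*\dfn V_A'(\qtau^*)$ is the unique PEPQ price and equals both fractions in \eqref{E:PEP}. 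Equivalently, I would verify the hypotheses of \cite[Theorem 5.8]{MR2667897}: Assumption \ref{A:H4} supplies the integrability its proof requires, while the non-constancy of $\beta^A(\Sigma^A_0+\Sigma^A_1)-\beta^B(\Sigma^B_0+\Sigma^B_1)$ is exactly the statement that the two tilted pricing measures $\prob^A$ and $\prob^B$ differ, which is the genericity condition of that theorem and rules out the degenerate no-trade situation.

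The main obstacle is the existence step: one must show the common optimal demand is attained at a \emph{finite} quantity, i.e.\ that the two demand schedules actually cross. This is where price impact is essential, and where the asymptotics of the tilted means (the Laplace principle together with Assumption \ref{A:H4}) and the distinctness of $\prob^A,\prob^B$ enter; by contrast the concavity/convexity bookkeeping and the extraction of the price formulas \eqref{E:PEP} from the first-order conditions are routine once the setup is in place.
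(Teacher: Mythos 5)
Your proposal is correct in substance, but it takes a genuinely different route from the paper: the paper offers no proof of Proposition \ref{P:PEPQ1} at all, simply invoking \cite[Theorem 5.8]{MR2667897}, whereas you reprove the result from first principles via the cumulant-generating-function structure of \eqref{E:indiff_px_all}. Your reduction to the scalar clearing equation $V_A'(\qtau)=p=V_B'(\qtau)$, the identification of the two first-order conditions with the two fractions in \eqref{E:PEP}, and the existence argument from the Laplace asymptotics $V_A'(\qtau)\downarrow\ul{h}$, $V_B'(\qtau)\uparrow\ol{h}$ (which is exactly the monotone-limit fact the paper already uses, via \cite{MR3668156}, in the proof of Proposition \ref{P:util_arb_free_price}) are all sound, and the self-contained argument buys transparency about where Assumption \ref{A:H4} and price impact actually enter. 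Two remarks. First, your argument hinges on $h$ being non-degenerate (so that the tilted variances are strictly positive and $\ul{h}<\ol{h}$); this is genuinely needed — for constant $h$ every quantity clears at $p=h$ and uniqueness fails — but it is not literally among the stated hypotheses, so you are right to flag it as an implicit assumption. Relatedly, your own construction shows that existence and uniqueness do not require the non-constancy of $\beta_A(\Sigma^A_0+\Sigma^A_1)-\beta_B(\Sigma^B_0+\Sigma^B_1)$; that hypothesis is equivalent to $\prob^A\neq\prob^B$, i.e.\ to $V_A'(0)\neq V_B'(0)$ being possible, and its role (as in the cited theorem) is to guarantee a non-trivial trade $\qtau^*\neq 0$ rather than to make the clearing equation solvable. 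Second, your aggregate surplus is $V_A(\qtau)-V_B(\qtau)=\qtau p^{I,A}(\qtau;\Sigma^A_1)-\qtau p^{I,B}(-\qtau;\Sigma^B_1)$, with a minus sign, while the displayed $\argmax$ in the Proposition carries a plus; your version is the one consistent with the first-order conditions \eqref{E:PEP} and with the explicit $\qtau^*$ of Example \ref{Ex:BachelierModel3}, so you have in effect (correctly) repaired a sign typo in the statement.
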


In other words, the partial equilibrium price $p^*$ is the marginal valuation of both large investors when their endowments include the equilibrium quantity $\qtau^*$. In view of Proposition \ref{P:arb_free_range}, the equilibrium price need not be arbitrage-free since there are cases where it lies  outside of the range $[\ul{h}(\qtau^*),\ol{h}(\qtau^*)]$, for both investors. The reasoning stems from the discussion on limited arbitrages. In particular, if $h$ is negatively correlated with the endowment of investor A and positively correlated with the endowment of investor B, $\qtau^*$ will be positive, due to the hedging benefits. On the other hand, the existing market makers' inventories in both markets may induce relatively high indifference prices for $h$ in market A and relatively low prices in market B. Even if these prices are outside of the non-arbitrage range, the mutual benefits of hedging force the investor to the other direction of the transaction. The following example reinforces this point.


\begin{exa}\label{Ex:BachelierModel3}
For $i\in\cbra{A,B}$, we again consider the Bachelier model of Examples \ref{Ex:cons_bach}, \ref{Ex:BachelierModel} and \ref{Ex:BachelierModel2}. Indifference valuation \eqref{E:indiff_px_all} yields for $i\in\cbra{A,B}$
\[p^{I,i}\left(\qtau;\Sigma^i_1\right) =-\frac{\beta^i}{2}\int_0^T\left(\qtau y_t+2(f_t^i+g_t^i)\right)'y_tdt.\]
From Proposition \ref{P:PEPQ1}, the unique equilibrium quantity is
\begin{equation*}
\qtau^*=\frac{\int_0^T \left(\beta^B(f^B_t+g^B_t)-\beta^A(f^A_t+g^A_t)\right)'y_tdt}{(\beta^A+\beta^B) \int_0^Ty_t'y_tdt}.
\end{equation*}
The equilibrium price is
\begin{equation*}
p^*=-\Gamma\int_0^T \left(f^A_t+g^A_t+f^B_t+g^B_t\right)'y_t dt; \qquad \frac{1}{\Gamma}=\frac{1}{\alpha_A}+\frac{1}{\alpha_B}+\frac{1}{\gamma_A}+\frac{1}{\gamma_B},
\end{equation*}
so that $\Gamma$ is the aggregate risk tolerance. Clearly, there is no reason for $p^*$ to be the unique (for the respective markets) arbitrage-free price in the sense of Definition \ref{D:strong_arb_free_price}. In particular, if endowments $\Sigma^i_j$ for $i=A,B; j=0,1$ are sufficiently different, the equilibrium quantity is large, provided $\beta^i, i = A,B$ are both near $0$. This means neither investor exploits an arbitrage when trading with her respective market maker. This departs from segmented market models with no price impact, where investors both create and mutually exploit arbitrage opportunities: see \cite{RahZig09, RahZig14}.
\end{exa}

\subsection{Large partial equilibrium quantities}

We saw in Section \ref{S:LargePositions} that an investor (whose risk aversion is fixed) will endogenously own a large position in the contingent claim $h$ provided a) the market maker risk aversion $\gamma \approx 0$ and b) the traded price $p$ is not the unique $p^{\infty}(0)$.  Presently, we show the above conditions arise endogenously in two stylized, but representative, situations. First, when one investor has an existing large position in the claim, and second when each market has a large number of market makers.

\begin{exa}\label{Exa:one_in_lcr}\textit{Investor $B$ in the large claim regime, market makers near risk neutrality}.  Assume $\gamma^A,\gamma^B\rightarrow 0$, with $\Sigma^A_0,\Sigma^B_0$ fixed. Investor $A$ has no endowment, while investor $B$ has endowment $\ell/\gamma^B h, \ell\neq 0$, so she is in the large claim regime. The equilibrium pricing formula \eqref{E:PEP} specifies to
\begin{equation*}
p^*=\frac{\espalt{}{}{he^{-\frac{\gamma^A\alpha^A}{\gamma^A+\alpha^A}(\Sigma^A_0 + \qtau^* h)}}}{\espalt{}{}{e^{-\frac{\gamma^A\alpha^A}{\gamma^A+\alpha^A}(\Sigma^A_0 + \qtau^* h)}}}=\frac{\espalt{}{}{he^{-\frac{\gamma^B\alpha^B}{\gamma^B+\alpha^B}\left(\Sigma^B_0+\left(\frac{\ell}{\gamma^B} - \qtau^*\right)h\right)}}}{\espalt{}{}{e^{-\frac{\gamma^B\alpha^B}{\gamma^B+\alpha^B}\left(\Sigma^B_0 + \left(\frac{\ell}{\gamma^B} - \qtau^*\right) h\right)}}}.
\end{equation*}
Let $\gamma^A,\gamma^B\rightarrow 0$ and assume $\gamma^A/\gamma^B\rightarrow \delta \in (0,\infty)$. If $p^*\rightarrow p^{\infty}(0) = \espalt{}{}{h}$ then necessarily the equation for $B$ implies $\qtau^* \approx \ell/\gamma^B$.  However, by considering the equation for $A$, this in turn implies $p^* = \espalt{}{}{he^{-\ell\delta h}}/\espalt{}{}{e^{-\ell\delta h}} \neq \espalt{}{}{h}$.  Thus, it must be that $p^*$ does not limit to $p^{\infty}(0)$, and large positions endogenously arise for $A$.  In fact, it is easy to see that $\qtau^* \approx \ell /(\gamma^A+\gamma^B)$.
\end{exa}

\begin{exa}\label{Exa:create_impact}\textit{Large number of market makers.} This example emphasizes that, under the presence of price impact, the large (equilibrium) positions on a claim could arise even when the large investors' initial endowments do not contain any position on the claim. As in Remark \ref{R:growing_endowment}, we assume for each $n$ there are $n$ market makers with risk aversion $\gamma^i$ and endowment $\Sigma^i_0$ (in each local market $i\in \cbra{A,B}$). We let each large investor have fixed endowment, and in fact without loss of generality (for this example), set $\Sigma^i_1 = 0$.  Here, \eqref{E:PEP} specifies to
\begin{equation*}
p^*=\frac{\espalt{}{}{he^{-\frac{\gamma^A\alpha^A}{\gamma^A/n+\alpha^A}\left(\Sigma^A_0 + \frac{\qtau^*_n}{n} h\right)}}}{\espalt{}{}{e^{-\frac{\gamma^A\alpha^A}{\gamma^A/n+\alpha^A}\left(\Sigma^A_0 + \frac{\qtau^*_n}{n} h\right)}}}=\frac{\espalt{}{}{he^{-\frac{\gamma^B\alpha^B}{\gamma^B/n+\alpha^B}\left(\Sigma^B_0-\frac{\qtau^*_n}{n}h\right)}}}{\espalt{}{}{e^{-\frac{\gamma^B\alpha^B}{\gamma^B/n+\alpha^B}\left(\Sigma^B_0 - \frac{\qtau^*_n}{n}h\right)}}}.
\end{equation*}
Now, assume optimal positions are dominated by $n$ in that $|\qtau^*_n|/n\rightarrow 0$.  As $n\uparrow\infty$, this implies
\begin{equation*}
\frac{\espalt{}{}{he^{-\gamma^A\Sigma^A_0}}}{\espalt{}{}{e^{-\gamma^A\Sigma^A_0}}}=\frac{\espalt{}{}{he^{-\gamma^B\Sigma^B_0}}}{\espalt{}{}{e^{-\gamma^B\Sigma^B_0}}}.
\end{equation*}
Therefore, we see that except in the highly particular case when the above equality holds, large positions will ``spontaneously'' (i.e.~when neither large investor owns a position in $h$ ahead of time) arise as the number of market makers becomes large.
\end{exa}
\bigskip

\appendix

\section{Proofs}\label{AS:Proofs}

\begin{proof}[Proof of Lemma \ref{L:PI_gains_process}]

From \cite[Theorem 3.2]{MR3005018} and \cite[Theorem 4.9]{MR3375887} we know that under Assumption \ref{A: integrability} the gains process is well defined for all locally bounded predictable strategies $Q$.  It remains to obtain the explicit expression for $V$ in \eqref{E:V_good}. Using the notation of \cite{MR3005018,MR3375887}, for $(v,x,q)\in (0,\infty)\times\reals\times\reals^d$ and $(u,y,q)\in (-\infty,0)\times(0,\infty)\times\reals^d$ define $\Sigma(x,q) \dfn \Sigma_0 + x + q'\Psi$, $r(v,x) \dfn -ve^{-\gamma x}$, $F_t(v,x,q) \dfn \condespalt{}{r(v,\Sigma(x,q)}{\F_t}$ and $G_t(u,y,q) \dfn \sup_{v > 0}\inf_{x\in\reals}\left(uv + xy - F_t(v,x,q)\right)$. Direct calculation yields
\begin{equation}\label{E:1}
\begin{split}
F_t(v,x,q) &= -ve^{-\gamma x}N_t(q);\qquad G_t(u,y,q) = \frac{y}{\gamma}\log\left(\frac{N_t(q)}{-u}\right).
\end{split}
\end{equation}
Following the terminology of \cite[Equations (3.19) and (4.16)]{MR3375887}, we define the market maker's expected utility process $\{U_t\}_{t\leq T}$  by $U_{t}= U_{t}(v,x,q)=(\partial/\partial v)F_t(v,x,q)$ for $t\leq T$. From \eqref{E:Nq} and \eqref{E:1}, we readily get that  $U_{t}$ will solve the affine stochastic differential equation $dU_t/U_t = H_t(Q_t)'dB_t$ for $t\leq T$, which admits the explicit strong solution
\begin{equation*}
U_t = U_0 \mathcal{E}\left(\int_0^\cdot H_s(Q_s)'dB_s\right)_t;\qquad U_0 = \espalt{}{}{-e^{-\gamma\Sigma_0}} = -N_0(0),
\end{equation*}
provided  $Q\in\A_{PI}$.  Continuing and slightly abusing the notation, \cite[Equation (4.19)]{MR3375887} implies the gains process $V_t(Q)$ is
\begin{equation*}
\begin{split}
V_t(Q) &= -G_t(U_t,1,0) = -\frac{1}{\gamma}\log\left(\frac{N_t(0)}{-U_t(Q)}\right);\\
&= -\frac{1}{\gamma}\log\left(\frac{N_t(0)}{N_0(0)}\right) + \frac{1}{\gamma}\int_0^t H_s(Q_s)'dB_s  - \frac{1}{2\gamma}\int_0^t |H_s(Q_s)|^2ds.
\end{split}
\end{equation*}
Now, from \eqref{E:Nq} we know $N_t(0)/N_0(0) = \mathcal{E}\left(\int_0^\cdot H_s(0)'dB_s\right)_t$, which implies
\begin{equation*}
V_t(Q) = \frac{1}{\gamma}\int_0^t (H_s(Q_s)-H_s(0))'dB_s  - \frac{1}{2\gamma}\int_0^t\left( |H_s(Q_s)|^2  - |H_s(0)|^2\right)ds.
\end{equation*}
The latter expression coincides with \eqref{E:V_good}, finishing the proof.

\end{proof}


\begin{proof}[Proof of Proposition \ref{P:powerUnconstrained}]
In light of Proposition \ref{P:power} it suffices to show  both $\tilde{u}_C(0;\Sigma_1)$ and $\tilde{u}(0;\Sigma_1)$ take the value in \eqref{E:vf_uncon}. From \eqref{E:rn_deriv}, \eqref{E:tprob_def}, the density of the (unique) martingale measure $\qprob_0$ with respect to $\tprob$ on $\F_T$ is
\begin{equation}\label{E:qprob_tprob_rn}
\wt{Z}_T = \left.\frac{d\qprob_0}{d\tprob}\right\vert_{\F_T} =\left.\frac{d\qprob_0}{d\prob}\right\vert_{\F_T}\times \left.\frac{d\prob}{d\tprob}\right\vert_{\F_T} = \frac{\espalt{}{}{e^{-\alpha \Sigma_1}}}{\espalt{}{}{e^{-\gamma\Sigma_0}}}e^{\alpha\Sigma_1 - \gamma\Sigma_0}.
\end{equation}
Recall from Proposition \ref{P:power} that $p=-\alpha/\gamma$ and $q\dfn p/(p-1) = \alpha/(\alpha+\gamma)$. A straight-forward calculation shows
\begin{equation}\label{E:rn_to_q}
\wt{Z}^q_T = \left(\frac{\espalt{}{}{e^{-\alpha\Sigma_1}}}{\espalt{}{}{e^{-\gamma\Sigma_0}}}\right)^{\frac{\alpha}{\alpha+\gamma}} \frac{e^{-\beta\left(\Sigma_0 + \Sigma_1\right)}}{e^{-\alpha\Sigma_1}}.
\end{equation}
Thus, $\tespalt{}{}{\wt{Z}_T^q} < \infty$ and by the standard complete market duality theory for power utility with initial wealth $e^{\gamma \times 0} = 1$ (c.f.~\cite{MR1640352}, \cite[Lemma 5]{MR2932547}) we obtain by explicit computation
\[\tilde{u}(0,\Sigma_1) = \frac{\alpha}{\gamma}\espalt{}{}{e^{-\alpha\Sigma_1}}\left(\frac{1}{p}\tespalt{}{}{\wt{Z}_T^{q}}^{1-p}\right)=- \espalt{}{}{e^{-\gamma \Sigma_0}}^{-\frac{\alpha}{\gamma}} \times \espalt{}{}{e^{-\beta\left(\Sigma_0 + \Sigma_1\right)}}^{\frac{\alpha}{\beta}}.
\]
As for $\hat{\pi}$, the (sufficient, c.f.~\cite[Lemma 5]{MR2932547}) first order conditions for optimality are
\begin{equation}\label{E:optimal_X}
X_T(\hat{\pi})\wt{Z}_T= \wt{Z}_T^q\times \left(\tespalt{}{}{\wt{Z}_T^q}\right)^{-1}.
\end{equation}
For a given strategy $\pi$ (and $x=0$) recall the log wealth process in \eqref{E:cons_wealth}. Using \eqref{E:rn_deriv}, \eqref{E:opt_M_uncon}, \eqref{E:qprob_tprob_rn} and \eqref{E:rn_to_q}, the log of the right hand side in \eqref{E:optimal_X} reduces to
\begin{equation*}
\int_0^T \left(\hphi_t-H_t(0)\right)'(dB_t-H_t(0)dt) - \frac{1}{2}\int_0^T |\hphi_t-H_t(0)|^2dt.
\end{equation*}
From here, it is clear $\hat{\pi}  = \hphi - H(0)$. Assumption \ref{A:H2} implies $\hat{\pi}\in\A_C$, completing the proof.

\end{proof}


\begin{proof}[Proof of Proposition \ref{P:arb_free_range}]
We first prove the statement regarding Definition \ref{D:arb_free_price}. To this end, fix $\qtau > 0$, $p\in\reals$ and $Q\in\A_{PI}$. Set $W(Q) = \qtau p + V_T(Q) -\qtau h$ and note
\begin{equation*}
\begin{split}
\frac{\espalt{}{0}{e^{\gamma(\qtau h+W(Q))}}}{\espalt{}{0}{e^{\gamma\qtau h}}} &= \frac{e^{\gamma\qtau p}\espalt{}{0}{e^{\gamma V_T(Q)}}}{\espalt{}{0}{e^{\gamma\qtau h}}}\leq \frac{e^{\gamma\qtau p}}{\espalt{}{0}{e^{\gamma\qtau h}}} = e^{\gamma\qtau(p-\ol{h}(\qtau))},
\end{split}
\end{equation*}
where the inequality used \eqref{E:V_good}.  Therefore, if $p\leq \ol{h}(\qtau)$ then $p$ satisfies part $\textit{(a)}$ in Definition \ref{D:arb_free_price}. If $p>\ol{h}(\qtau)$, we observe for $Q\in\A_{PI}$ from \eqref{E:short_rep}
\begin{equation*}
W(Q) = \qtau p + V_T(Q) - \qtau h = \qtau(p - \ol{h}(\qtau)),
\end{equation*}
so  $p$ does not satisfy part $\textit{(a)}$ of Definition \ref{D:arb_free_price}. This gives the upper bound.  For the lower bound, now set $\tilde{W}(Q) = -\qtau p + V_T(Q) + \qtau h$. We then have
\begin{equation*}
\begin{split}
\frac{\espalt{}{0}{e^{\gamma(-\qtau h+\tilde{W}(Q))}}}{\espalt{}{0}{e^{-\gamma\qtau h}}} &= \frac{e^{-\gamma\qtau p}\espalt{}{0}{e^{\gamma V_T(Q)}}}{\espalt{}{0}{e^{-\gamma\qtau h}}}\leq \frac{e^{-\gamma\qtau p}}{\espalt{}{0}{e^{-\gamma\qtau h}}} = e^{-\gamma\qtau(p-\ul{h}(\qtau))}.
\end{split}
\end{equation*}
Therefore if $p\geq \ul{h}(\qtau)$, then $p$ satisfies part $\textit{(b)}$ in Definition \ref{D:arb_free_price}. If $p<\ul{h}(\qtau)$, we observe for $Q\in\A_{PI}$ from \eqref{E:short_rep} with $-h$ replacing $h$
\begin{equation*}
\tilde{W}(Q) = -\qtau p + V_T(Q) +\qtau h = \qtau(-p +\ul{h}(\qtau)),
\end{equation*}
so that $p$ does not satisfy part $\textit{(b)}$ of Definition \ref{D:arb_free_price}. This gives the lower bound, finishing the result for Definition \ref{D:arb_free_price}.

For item \textit{i)}, we note that if $p$ is arbitrage-free in the sense of Definition \ref{D:strong_arb_free_price}, then for all $\qtau>0$, $p$ is arbitrage-free in the sense of Definition \ref{D:arb_free_price}, and hence for all $\qtau>0$ we must have
\begin{equation*}
-\frac{1}{\gamma\qtau}\log\left(\espalt{}{0}{e^{-\gamma\qtau h}}\right) \leq p \leq \frac{1}{\gamma\qtau}\log\left(\espalt{}{0}{e^{\gamma\qtau h}}\right).
\end{equation*}
H\"{o}lder's inequality shows the function on the left side above is decreasing in $\qtau>0$ and the function on the right is increasing in $\qtau$.  Furthermore, by the dominated convergence theorem, the limit as $\qtau\downarrow 0$ of each of these functions is $\espalt{}{0}{h}$. Thus, $\espalt{}{0}{h}$ is arbitrage-free in the sense of Definition \ref{D:arb_free_price} for all $\qtau>0$. Also, since for all $Q\in\A_{PI}$, $\espalt{}{}{e^{\gamma V_T(Q)}} \leq 1$, we see $\espalt{}{0}{h}$ is also (vacuously) arbitrage-free at $\qtau = 0$. Finally, by the symmetry of Definition \ref{D:arb_free_price} we see $\espalt{}{0}{h}$ is also the only arbitrage-free price in the sense of Definition \ref{D:strong_arb_free_price}, even for $\qtau<0$.
\end{proof}


\begin{proof}[Proof of Proposition \ref{P:util_arb_free_price}]
Using Proposition \ref{P:powerUnconstrained} with large investor endowment $\Sigma_1+\qtau h$, we see
\begin{equation*}
u(-p\qtau;\Sigma_1 + \qtau h) = -e^{\alpha p\qtau}\espalt{}{}{e^{-\gamma\Sigma_0}}^{-\tfrac{\alpha}{\gamma}}\times\espalt{}{}{e^{-\beta\left(\Sigma_0+\Sigma_1+\qtau h\right)}}^{\tfrac{\alpha}{\beta}}.
\end{equation*}
Thus, the optimization problem is to minimize
\begin{equation*}
p\qtau + \frac{1}{\beta}\log\left(\espalt{}{}{e^{-\beta(\Sigma_0+\Sigma_1 +\qtau h)}}\right),
\end{equation*}
over $\qtau\in\reals$.  Under the given integrability assumptions the above function is both smooth and strictly convex in $\qtau$.  The first order conditions for optimality are \eqref{E:opt_demand}, and that the right hand side of \eqref{E:opt_demand} goes to $\ul{h}$ (resp.~$\ol{h}$) as $\qtau$ goes to $\infty$ (resp.~$-\infty$) has been shown in \cite{MR3668156}.
\end{proof}

\bibliographystyle{siam}


\def\polhk#1{\setbox0=\hbox{#1}{\ooalign{\hidewidth
  \lower1.5ex\hbox{`}\hidewidth\crcr\unhbox0}}}

\end{document}